\newtheorem{theorem}{Theorem}
\newtheorem{clm}{Claim}
\newtheorem{lemma}{Lemma}
\newtheorem{definition}{Definition}
\newtheorem{corollary}{Corollary}
\newtheorem{proposition}{Proposition}
\newtheorem{remark}{Remark}
\newcommand{\conv}{\textrm{conv}}
\newcommand{\E}{\mathbb{E}}
\newcommand{\diag}{\textrm{diag}}
\newcommand{\Supp}{\textrm{Supp}}
\begin{document}
\title{Approximating Nash Equilibria and Dense Subgraphs via an \\ Approximate Version of Carath\'{e}odory's Theorem}
%and \\ Its Algorithmic Applications \fontsize{15.89}{1.2}\selectfont
\author{Siddharth Barman\thanks{California Institute of Technology. {\tt barman@caltech.edu}}}
\date{}

\maketitle

\begin{abstract}
We present algorithmic applications of an approximate version of Carath\'{e}odory's theorem. The theorem states that given a set of vectors $X$ in $\mathbb{R}^d$, for every vector in the convex hull of $X$ there exists an $\varepsilon$-close (under the $p$-norm distance, for $2\leq p < \infty$) vector that can be expressed as a convex combination of at most $b$ vectors of $X$, where the bound $b$ depends on $\varepsilon$ and the norm $p$ and is \emph{independent} of  the dimension $d$. This theorem can be derived by instantiating Maurey's lemma, early references to which can be found in the work of Pisier (1981) and Carl (1985). However, in this paper we present a self-contained proof of this result. %approximate version of Carath\'{e}odory's theorem.

Using this theorem we establish that in a bimatrix game with $ n \times n$ payoff matrices $A, B$, if the number of non-zero entries in any column of $A+B$ is at most $s$ then an $\varepsilon$-Nash equilibrium of the game can be computed in time $n^{O\left(\frac{\log s }{\varepsilon^2}\right)}$. This, in particular, gives us a polynomial-time approximation scheme for Nash equilibrium in games with fixed column sparsity $s$. Moreover, for arbitrary bimatrix games---since $s$ can be at most $n$---the running time of our algorithm matches the best-known upper bound, which was obtained by Lipton, Markakis, and Mehta (2003).

The approximate Carath\'{e}odory's theorem also leads to an \emph{additive} approximation algorithm  for the normalized densest $k$-subgraph problem. Given a graph with $n$ vertices and maximum degree $d$, the developed algorithm determines a subgraph with exactly $k$ vertices with normalized density within $\varepsilon$ (in the additive sense) of the optimal in time $n^{O\left( \frac{\log d}{\varepsilon^2}\right)}$. Additionally, we show that a similar approximation result can be achieved for the problem of finding a $k \times k$-bipartite subgraph of maximum normalized density. 
%, where $t$ has a logarithmic dependence on the degree $d$.

% in quasi-polynomial time. % Specifically, for a graph with $n$  vertices and maximum degree $d$, the running time of the algorithm is $n^{O\left(\frac{\log d}{\varepsilon^2}\right)}$.
% where $n$ is the number of vertices in the input graph and $d$ denotes the maximum degree.

% here $s$ corresponds to the maximum number of is the column-sparsity of $A+B$ plus a small constant.
% for computing approximate Nash equilibirium
%f $n^{O\left(\frac{\log n}{\varepsilon^2}\right)}$
%is , which in fact is the best-known bound for computing approximate Nash equilibrium (Lipton, Markakis, and Mehta 2003).

%matches the best-known upper bound

%the best-known running-time bound for computing approximate Nash equilibrium is $n^{O\left(\frac{\log n}{\varepsilon^2}\right)}$ (Lipton, Markakis, and Mehta 2003); since column-sparsity of $A+B$ can be at most $n$ ($s \leq n$), for arbitrary  bimatrix games the running time of our algorithm

%Since column-sparsity of $A+B$ can be at most $n$ ($s \leq n$), for general bimatrix games the running time of our algorithm is $n^{O\left(\frac{\log n}{\varepsilon^2}\right)}$. In other words, our algorithm interpolates across the entire spectrum of bimatrix games in terms of column sparsity and for arbitrary games its running time matches the best-known upper bound for computing approximate Nash equilibrium (Lipton, Markakis, and Mehta 2003).

\end{abstract}

%\newpage
%\setcounter{page}{1}

\section{Introduction}
Carath\'{e}odory's theorem is a fundamental dimensionality result in convex geometry. It states that any vector in the convex hull of a set $X$  in $\mathbb{R}^d$ can be expressed as a convex combination of at most $d+1$ vectors of $X$.\footnote{This bound of $d+1$ is tight.} This paper considers a natural approximate version of  Carath\'{e}odory's theorem where the goal is to seek convex combinations that are close enough to vectors in the convex hull. Specifically, this approximate version establishes that given a set of vectors $X$ in the $p$-unit ball\footnote{That is, $X$ is contained in the set $\{ v \in \mathbb{R}^d \mid \| v  \|_p \leq 1\}$. } with norm $p \in [2, \infty)$, for every vector $\mu$ in the convex hull of $X$ there exists an $\varepsilon$-close---under the $p$-norm distance---vector $\mu'$ that  can be expressed as a convex combination of $\frac{4 p }{\varepsilon^2}$ vectors of $X$.  A notable aspect of this result is that the number of vectors of $X$ that are required to express $\mu'$, i.e., $\frac{4 p}{\varepsilon^2}$, is independent of the underlying dimension $d$. This theorem can be derived by instantiating Maurey's lemma, early references to which can be found in the work of Pisier~\cite{pisier1981remarques} and Carl~\cite{carl1985inequalities}. However, in this paper we present a self-contained proof of this result, which we proceed to outline below. The author was made aware of the connection with Maurey's lemma after a preliminary version of this work had appeared. 
%independently proving the theorem in an earlier version of this work. 

%The independent proof is self-contained and its outline is given below; the underlying details appear in Section~\ref{sect:caratheodory}. 

To establish the approximate version of Carath\'{e}odory's theorem we use the probabilistic method. Given a vector $\mu$ in the convex hull of a set $X \subset \mathbb{R}^d$, consider a convex combination of vectors of $X$ that generates $\mu$. The coefficients in this convex combination induce a probability distribution over $X$ and the mean of this distribution is $\mu$. The approach is to draw $b$ independent and identically distributed (i.i.d.)\ samples from this distribution and show that with positive probability the sample mean, with an appropriate number of samples, is close to $\mu$ under the $p$-norm distance, for $p \in [2, \infty)$. Therefore, the probabilistic method implies that these exists a vector close to $\mu$ that can be expressed as a convex combination of at most $b$ vectors, where $b$ is the number of samples we drew. 

Note that in this context applying the probabilistic method is a natural idea, but a direct application of this method will not work. Specifically, a dimension-free result is unlikely if we first try to prove that the $i$th component of the sample mean vector is close to the $i$th component of $\mu$, for every $i \in [d]$; since this would entail a union bound over the number of components $d$. Bypassing such a component-wise analysis requires the use of atypical    ideas. We are able to accomplish this task and, in particular, bound (in expectation) the $p$-norm distance between $\mu$ and the sample mean vector via an interesting application of Khintchine inequality (see Theorem~\ref{thm:Khintchine-v}).

%\gamma^2 This classic result states that
%Here, $\gamma:=\max_{x \in X} \| x  \|_p$. is $\varepsilon$ close to $\mu$---under the $p$-norm distance---and this vector $\mu'$
%If one requires an exact representation (as a convex combination) for each vector in the convex hull then this $d+1$ bound is tight.
%A natural But, say we relax this requirement and seek convex combinations that are close enough to vectors in the convex hull.  Though, it is not obvious if this bound holds when we relax the requirement and seek convex combinations that are close enough to vectors in the convex hull. This paper shows that for this natural relaxation we can obtain a bound that is in fact independent of the dimension $d$: given a set of vectors $X$ in $\mathbb{R}^d$ and norm $p \in [2, \infty)$, for every vector $\mu$ in the convex hull of $X$ there exists a vector $\mu'$ that is $\varepsilon$ close to $\mu$ under the $p$-norm distance and this vector $\mu'$ can be expressed as a convex combination of at most $O( p \gamma^2/\varepsilon^2)$ vectors of $X$.  Here, $\gamma:=\max_{x \in X} \| x  \|_p$.

%But, we can also consider a natural relaxation and, in particular, seek convex combinations that are close enough to vectors in the convex hull. We show that we can in fact obtain a dimension-free result for this relaxation:

Given the significance of Carath\'{e}odory's theorem, this approximate version is interesting in its own right. The key contribution of the paper is to substantiate the algorithmic relevance of this approximate version by developing new algorithmic applications. Our applications include additive approximation algorithms for (i) Nash equilibria in two-player games, and (ii) the densest subgraph problem. These algorithmic results are outlined below. %The full version of the paper~\cite{barman2015approximate} extends the approximate version to address particular generalizations of Carath\'{e}odory's theorem. % (see Section~\ref{sect:ext}).  %of Carath\'{e}odory's theorem %convex hulls of matrices and

\subsection*{Algorithmic Applications}

\paragraph{Approximate Nash Equilibria.} Nash equilibria are central constructs in game theory that are used to model likely outcomes of strategic interactions between self-interested entities, like human players. They denote distributions over actions of players under which no player can benefit, in expectation, by unilateral deviation. These solution concepts are arguably the most well-studied notions of rationality and questions about their computational complexity lie at the core of algorithmic game theory.  In recent years, hardness results have been established for Nash equilibrium, even in two-player games~\cite{CD,DGP}. But, the question whether an \emph{approximate} Nash equilibrium can be computed in polynomial time still remains open. Throughout this paper we will consider the standard additive notion of approximate Nash equilibria that are defined as follows: a pair  distributions, one for each player, is said to be an $\varepsilon$-Nash equilibrium if any unilateral deviation increases utility by at most $\varepsilon$, in expectation.

%of independent
% We will refer to these approximate equilibria as $\varepsilon$-Nash equilibria.
%This paper uses the approximate version of Carath\'{e}odory's theorem to address this central open question. % from the area of equilibrium computation. contributes to the understanding of
We apply the approximate version of Carath\'{e}odory's theorem to address this central open question. Specifically, we prove that in a bimatrix game with $n \times n$ payoff matrices $A, B$, i.e., a two-player game with $n$ actions for each player, if the number of non-zero entries in any column of $A+B$ is at most $s$ then an $\varepsilon$-Nash equilibrium of the game can be computed in time $n^{O\left(\frac{\log s}{\varepsilon^2}\right)}$. Our result, in particular, shows that games with fixed column sparsity $s$ admit a polynomial-time approximation scheme (PTAS) for Nash equilibrium. Along the lines of zero-sum games (which model strict competition), games with fixed column sparsity capture settings in which, except for particular action profiles, the gains and losses of the two player balance out. In other words, such games are a natural generalization of zero-sum games; recall that zero-sum games admit efficient computation of Nash equilibrium (see, e.g.,~\cite{AGTbook}). 

It is also worth pointing out that for an arbitrary bimatrix game the running time of our algorithm is $n^{O\left(\frac{\log n}{\varepsilon^2}\right)}$, since $s$ is at most $n$. Given that the best-known algorithm for computing $\varepsilon$-Nash equilibrium also runs in time $n^{O\left(\frac{\log n}{\varepsilon^2}\right)}$~\cite{LMM}, for general games the time complexity of our algorithm matches the best-known upper bound. Overall, this result provides a parameterized understanding of the complexity of computing approximate Nash equilibrium in terms of a very natural measure, the column sparsity $s$ of the matrix $A+B$. 

Our framework can address other notions of sparsity as well. Specifically, if there exist \emph{constants} $\alpha, \beta \in \mathbb{R}_+$ and $\gamma \in \mathbb{R}$ such that the matrix $\alpha A + \beta B + \gamma \mathbbm{1}_{n \times n}$ has column or row sparsity $s$, then our algorithm can be directly adopted to find an $\varepsilon$-Nash equilibrium of the game $(A,B)$ in time $n^{O\left(\frac{\log s}{\varepsilon^2}\right)}$; here, $\mathbbm{1}_{n \times n}$ is the all-ones $n \times n$ matrix.\footnote{Note that given matrices $A$ and $B$, parameters $\alpha$, $\beta$, and $\gamma$ can efficiently computed.} Additionally, the same running-time bound can be achieved for approximating Nash equilibrium in games wherein \emph{both} matrices $A$ and $B$ have column or row sparsity $s$. Note that this case is not subsumed by the previous result; in particular, if the columns of matrix $A$ and the rows of matrix $B$ are sparse, then it is not necessary that $A+B$ has low column or row sparsity.  

%hierarchical understanding in terms of
%Contributes to the understanding of this question.
% contribute to the understanding of this
%We use this theorem to show that in a bimatrix game with payoff matrices $A, B \in \mathbb{R}^{n \times n}$ an $\varepsilon$-Nash equilibrium can be computed in time $n^{O\left(\frac{\log s}{\varepsilon^2}\right)}$. Here $s$ is the column-sparsity of matrix $A+B$; in other words, the number of non-zero components in any column of $A+B$ is upper bounded by $s$. So, the running time of our algorithm is polynomial when the game is ``near'' zero-sum; specifically,  when column-sparsity $s$ is a fixed constant. In addition, the running time is at most $n^{O\left(\frac{\log n}{\varepsilon^2}\right)}$, since $s\leq n$. Recall that the best-known algorithm for computing $\varepsilon$-Nash equilibrium in bimatrix games runs in time $n^{O(\frac{\log n}{\varepsilon^2})}$ (Lipton, Markakis, and Mehta 2003). Hence, our algorithm interpolates across the entire spectrum of bimatrix games in terms of column sparsity and for arbitrary games its running time matches the best-known bound.
%an approximate Nash equilibrium can be computed in polynomial time in any

We also refine the following result of Daskalakis and Papadimitriou~\cite{smallprob}: They develop a PTAS for bimatrix games that admit an equilibrium with small, specifically $O\left(\frac{1}{n}\right)$, probability values. This result is somewhat surprising, since such small-probability equilibria have large, $\Omega(n)$, support, and hence are not amenable to, say, exhaustive search. We show that if a game has an equilibrium with probability values $O\left( \frac{1}{m} \right)$, for $m \in [n]$, then an approximate equilibrium can be computed in time $n^t$, where $t= O\left(\frac{\log (s/m)}{\varepsilon^2}\right)$. Since $s\leq n$, we get the result of~\cite{smallprob} as a special case.
%of the approximate version of Carath\'{e}odory's theorem
%We also study  (\rm{DkBS}).

%vertex subsets, $S$ and $T$, such that the number of edges (of the graph) {between} $S$ and $T$ is maximized

\paragraph{Densest Subgraph.} In the normalized densest $k$-subgraph problem (\rm{NDkS}) we are given a simple graph and the objective is to find a size-$k$ subgraph (i.e., a subgraph containing exactly $k$ vertices) of maximum density; here, density is normalized to be at most one, i.e., for a subgraph with $k$ vertices, it is defined to be the number of edges in the subgraph divided by $k^2$. \rm{NDkS} is simply a normalized version of of the standard densest $k$-subgraph problem (see, e.g.,~\cite{alon2011inapproximability} and references therein) wherein the goal is to find a  subgraph with $k$ vertices with the maximum possible number of edges in it. The densest $k$-subgraph problem (\rm{DkS}) is computationally hard and it is shown in~\cite{alon2011inapproximability} that a constant-factor approximation for \rm{DkS} is unlikely. This result implies that \rm{NDkS} is hard to approximate (multiplicatively) within a constant factor as well.

In this paper we focus on an additive approximation for \rm{NDkS}. In particular, our objective is to compute a size-$k$ subgraph whose density is close (in the additive sense) to the optimal. The paper also presents additive approximations for the densest $k$-bipartite subgraph (\rm{DkBS}) problem. \rm{DkBS} is a natural variant of \rm{NDkS} and the goal in this problem is to find size-$k$ vertex subsets of maximum density. In the bipartite case, density of vertex subsets $S$ and $T$ is defined to be the number of edges between the two subsets divided by $|S||T|$. 

Hardness of additively approximating \rm{DkBS} was studied by Hazan and Krauthgamer~\cite{hazan2011hard}. Specifically, the reduction in~\cite{hazan2011hard} rules out an additive PTAS for \rm{DkBS}, under complexity theoretic assumptions.\footnote{They reduce the problem of determining a \emph{planted clique} to that of computing an $\varepsilon$-additive approximation for \rm{DkBS}, with a sufficiently small but constant $\varepsilon$.} In terms of upper bound, the result of Alon et al.~\cite{alonvempala} presents an algorithm for this problem that runs in time exponential in the rank of the adjacency matrix. 

This paper develops the following complementary upper bounds: given a graph with $n$ vertices and maximum degree $d$, an $\varepsilon$-additive approximation for \rm{NDkS} can be computed in time $n^{O\left(\frac{\log d}{\varepsilon^2}\right)}$. This paper also presents an algorithm with the same time complexity for additively approximating \rm{DkBS}.

\subsection{Related Work}
%Related Work
%Alon Vempala, Kannan...Costis Anon Games...Sparse games are PPAD hard....Costis' result (the oblivious PTAS one) on sparse games--make sure to differentiate the notion of sparsity

\paragraph{Approximate Version of Carath\'{e}odory's Theorem.} In this paper we provide a self-contained proof of the approximate version of Carath\'{e}odory's theorem, employing the Khintchine inequality (see Theorem~\ref{thm:Khintchine-v}), and use the theorem to develop new approximation algorithms. As mentioned earlier, the approximate version of Carath\'{e}odory's theorem can also be obtained by instantiating Maurey's lemma, which, in particular, appears in the analysis and operator theory literatures; see, e.g.,~\cite{pisier1981remarques, carl1985inequalities, bourgain1989duality}. 

%The Euclidean case, i.e., $p=2$, of the approximate version of Carath\'{e}odory's theorem admits a direct proof (see, e.g.,~\cite{GFA} or the attached full version of the paper). A key contribution of this paper is to establish the result for arbitrary norms $p \in [2, \infty)$. Our generalization of this result to arbitrary norms $p \in [2,\infty)$ requires new ideas, including an interesting  application of the Khintchine inequality (see Theorem~\ref{thm:Khintchine-v}), and is critical in proving our algorithmic applications. In particular, we prove the correctness of the proposed algorithms using the fact that the result holds for general norms between $2$ and $\infty$. % (see Section~\ref{sect:nash} and~\ref{sect:dense}).

%is not a direct extension of the Euclidean case and is based on interesting results like the Khintchine inequality (see Section~\ref{sect:caratheodory}).
\paragraph{Approximate Nash Equilibria.} The computation of equilibria is an active area of research. Nash equilibria is known to be computationally hard~\cite{CD,DGP}, and in light of these findings, a considerable effort has been directed towards understanding the complexity of \emph{approximate} Nash equilibrium. Results in this direction include both upper bounds~\cite{LMM,KPS,daskalakis2006note, KT, DMPprogress, kontogiannis2007efficient, feder2007approximating, bosse2007new, tsaknakis2007optimization, tsaknakis2010practical, alonvempala, alon2014cover} and lower bounds~\cite{hazan2011hard,daskalakis2013complexity, bravermanapproximating}.  In particular, it is known that for a general bimatrix game an approximate Nash equilibrium can be computed in quasi-polynomial time~\cite{LMM}. Polynomial time algorithms have been developed for computing approximate Nash equilibria for fixed  values of the approximation factor $\varepsilon$; the best-known result of this type shows that a $0.3393$-approximate Nash equilibrium can be computed in polynomial time~\cite{tsaknakis2007optimization}. In addition, several interesting classes of games have been identified that admit a PTAS~\cite{KT, smallprob, tsaknakis2010practical, alonvempala, alon2014cover}. For example, the result of Alon et al.~\cite{alonvempala} provides a PTAS for games in which the sum of the payoff matrices, i.e., $A+B$, has logarithmic rank. Our result is incomparable to such rank based results, since a sparse matrix can have high rank and a low-rank matrix can have high sparsity.

%Our paper complements such rank based
%VC Dimension result of Alon (APPROX). There exists a matrix with high VC dimension but small column p norm. Start with a matrix with large VC dimension, scale it down. VC dimension is scale invariant. The scaled-down matrix has large VC dimension but low column p norm.

%shows that an $\varepsilon$-Nash equilibrium can be computed in time $\left(\frac{1}{\varepsilon}\right)^{O(r)} \textrm{poly}(n)$, where $r$ is the rank of the sum of the payoff matrices, i.e., $r = \textrm{rank}(A+B)$.

Chen et al.~\cite{chen2006sparse} considered sparsity in the context of games and showed that computing an exact Nash equilibrium is hard even if \emph{both} the payoff matrices have a fixed number of non-zero entries in every row \emph{and} column. It was observed in~\cite{smallprob} that such games admit a trivial PTAS.\footnote{In particular, the product of uniform distributions over players' actions corresponds to an approximate Nash equilibrium in such games.} Note that we study a strictly larger class of games and provide a PTAS for games in which the row \emph{or} column sparsity of $A+B$ is fixed.

%
%Alon et al. hardness result for densest subgraph...Bharskar's approx algo for densest subgraph...Alon Vempala's bilinear program that we use
% for the problem for \rm{DkBS}
\paragraph{Densest Subgraph.} The best-known (multiplicative) approximation ratio for the densest $k$-subgraph problem is $n^{(1/4+ o(1))}$~\cite{bhaskara2010detecting}. But unlike this result, our work addresses additive approximations with normalized density as the maximization objective. In parituclar, we approximate \rm{NDkS} by approximately solving a quadratic program, which is similar to the quadratic program used in the Motzkin-Straus theorem~\cite{motzkin_maxima_1965}. 

In addition, our approximation algorithm for \rm{DkBS} is based on solving a bilinear program that was formulated by Alon et al.~\cite{alonvempala}. This bilinear program was used in~\cite{alonvempala} to develop an additive PTAS for \rm{DkBS} in particular classes of graphs, including ones with low-rank adjacency matrices. This paper supplements prior work by developing an approximation algorithm whose running time is parametrized by the maximum degree of the given graph, and not by the rank of its adjacency matrix.
%We show that near-optimal solutions of the formulated quadratic program can be used to find additive-approximate solutions for \rm{NDkS}. Furthermore, we present an algorithm for finding near-optimal solutions of the quadratic program, and obtain the stated result.  
 
%Along similar lines, our approximation algorithm for \rm{DkBS} is based on solving a bilinear program. This bilinear program was formulated in~\cite{alonvempala}, where it was used to develop an additive PTAS for \rm{DkBS} for particular classes of graphs, including ones with low-rank adjacency matrices. This paper supplements prior work by developing an approximation algorithm whose running time is parametrized by the maximum degree of the given graph, and not by the rank of its adjacency matrix.

\subsection{Techniques}
\label{sect:tech}

\paragraph{Approximate Nash Equilibria.} Our algorithm for computing an approximate Nash equilibrium relies on finding a near-optimal solution of a bilinear program (BP).  The BP we consider was formulated by Mangasarian and Stone~\cite{MS} and its optimal (near-optimal) solutions correspond to exact (approximate) Nash equilibria of the given game. %{This observation has been used in prior work (see, e.g.,~\cite{KT,alonvempala}) to find approximate Nash equilibrium.} Here, our contribution is to show that this BP can be efficiently approximated if the sum of the payoff matrices is column sparse.
Below we provide a sketch of our algorithm that determines a near-optimal solution of this BP.

The variables of the BP, $x$ and $y$, correspond to probability distributions that are mixed strategies of the players and its objective is to maximize $x^T C y$, where $C$ is the sum of the payoff matrices of the game.\footnote{We ignore the linear part of the objective for ease of presentation, see Section~\ref{sect:nash} for details.}  Suppose we knew the vector $u:=C \hat{y}$, for some Nash equilibrium $(\hat{x}, \hat{y})$. Then, a Nash equilibrium can be efficiently computed by solving a linear program (with variables $x$ and $y$) that is obtained by modifying the BP as follows: replace $x^T C y $ by $x^Tu$ as the objective and include the constraint $Cy = u$. Section~\ref{sect:nash} shows that this idea can be used to find an approximate Nash equilibrium, even if $u$ is not exactly equal to $C\hat{y}$ but close to it.  That is, to find an approximate Nash equilibrium it suffices to have a vector $u$ for which $\| C\hat{y} - u \|_p$ is small.

To apply the approximate version of Carath\'{e}odory's theorem we observe that $C\hat{y}$ is a vector in the convex hull of the \emph{columns} of $C$. Also, note that in the context of (additive) approximate Nash equilibria the payoff matrices are normalized, hence  the absolute value of any entry of matrix $C$ is no more than, say, $2$. This entry-wise normalization implies that if no column of matrix $C$ has more than $s$ non-zero entries, then the $\log s$ norm of the columns is a fixed constant: $\| C^i \|_p \leq (s \cdot 2^p )^{1/p} = 2 \cdot 2^{\frac{\log s}{p}} \leq 4 $, where $C^i$ is the $i$th column of $C$ and norm $p = \log s$. This is a simple but critical observation, since it implies that, modulo a small scaling factor, the columns of an $C$ lie in the $\log s$-unit ball.  At this point we can apply the approximate version of Carath\'{e}odory's theorem to guarantee that close to $C \hat{y}$ there exists a vector $u$ that can be expressed as a convex combination of about $p=\log s$ columns of $C$. We show in Section~\ref{sect:nash} that exhaustively searching for $u$ takes $n^{O(\log s)}$ time, where $n$ is the number of columns of $C$. Thus we can find a vector close to $C \hat{y}$ and hence determine a near-optimal solution of the bilinear program. This way we get an approximate Nash equilibrium and the running time of the algorithm is dominated by the exhaustive search.

Overall, this template for approximating Nash equilibria in sparse games is made possible by the approximate version of Carath\'{e}odory's theorem. It is notable that our algorithmic framework employs arbitrary norms $p \in [2, \infty)$, and in this sense it goes beyond standard \emph{$\varepsilon$-net}-based results that typically use norms $1$, $2$, or $\infty$.  
\paragraph{Densest Subgraph.} The algorithmic approach outlined above applies to any quadratic or bilinear program in which the objective matrix is column (or row) sparse and the feasible region is contained in the simplex. We use this observation to develop an additive approximations for \rm{NDkS} and \rm{DkBS}.

Specifically, we formulate a quadratic program, near-optimal solutions of which correspond to approximate-solutions of \rm{NDkS}. The column sparsity of the objective matrix in the quadratic program is equal to the maximum degree of the underlying graph plus one. Hence, using the above mentioned observation, we obtain the approximation result for \rm{NDkS}. The same template applies to \rm{DkBS}; for this problem we employ a bilinear program from~\cite{alonvempala}. 

% , It was shown in~\cite{alonvempala} that can be formulated as a bilinear program. In this program, the matrix in the objective (i.e., the bilinear form) is the adjacency matrix of the graph; hence, its column sparsity corresponds to the maximum degree of the graph. Intuitively, our result for \rm{DkBS} follows from these observations.

%Densest bipartite subgraph has a bilinear program as well [Alon Vempala].'and the matrix in the objective of this program is the adjacency matrix of the given graph.

\subsection{Organization}
We begin by setting up notation in Section~\ref{sect:notation}. Then, in Section~\ref{sect:caratheodory} we present the approximate version of Carath\'{e}odory's theorem. Algorithmic applications of the theorem are developed in Sections~\ref{sect:nash} and~\ref{sect:dense}. In Section~\ref{sect:ext} we consider convex hulls of matrices and also detail approximate versions of the colorful Carath\'{e}odory theorem and Tverberg's theorem. Finally, Section~\ref{sect:lb} presents a lower bound proving showing that, in general, $\varepsilon$-close (under the $p$-norm distance with $p \in [2, \infty)$) vectors cannot be expressed as a convex combination of less than $ \frac{1}{4 \ \varepsilon^{p/(p-1)}} $ vectors of the given set.

\section{Notation}
\label{sect:notation}
Write $\| x \|_p$ to denote the $p$-norm of a vector $x\in \mathbb{R}^d$. The Euclidean norm is denoted by $\| x \|$, i.e., we drop the subscript from $\| x \|_2$. The number of non-zero components of a vector $x$ is specified via the $\ell_0$ ``norm'': $\| x \|_0 := | \{ i \mid x_i \neq 0\} |$. Let $\Delta^n$ be the set of probability distributions over the set $[n]$. For $x \in \Delta^n$, we define $\Supp(x) := \{ i  \mid  x_i \neq 0\}$. Similarly, for a vector $v \in \mathbb{R}^n$ write $\Supp(v)$ to denote the set $\{ i  \mid v_i \neq 0 \}$.
%We use  For a vector $x \in \mathbb{R}^d$, write $x_i$ to denote the $i$th component of $x$.
%We say $x \in \mathbb{R}^d$ is a unit vector if it satisfies $\| x \| =1$.
% of vectors $X=\{ x_1, x_2, \ldots, x_n \}$.

Given a set $X=\{ x_1, x_2, \ldots, x_n \} \subset \mathbb{R}^d$, we use the standard abbreviation $\conv(X)$ for the convex hull of $X$. A vector $y \in \conv(X)$ is said to be \emph{$k$ uniform} with respect to $X$ if there exists a size $k$ multiset $S$ of $[n]$ such
that $ y = \frac{1}{k} \sum_{i \in S} x_i $. In particular, if vector $y$ is $k$ uniform with respect to $X$ then $y$
can be expressed as a convex combination of at most $k$ vectors from $X$. Throughout, the set $X$ will be clear
from context so we will simply say that a vector is $k$ uniform and not explicitly mention the fact that uniformity is with respect to $X$.

\section{Approximate Version of Carath\'{e}odory's Theorem}
\label{sect:caratheodory}

A key technical ingredient in our  proof is  Khintchine inequality (see, e.g.,~\cite{garling} and~\cite{lindenstrauss1973classical}). The following form of the inequality is derived from a result stated in~\cite{So}.

%; see the full version of the paper~\cite{barman2015approximate} for a proof.

\begin{comment}%%%%%%%%%%%%%%%%%%%%%%

First we state McDiarmid's inequality~\cite{mcd}. We will use this concentration bound in the proof of our theorem.
%taking values in the set $\mathcal{Z}$.
\begin{theorem}[McDiarmid's inequality]
\label{thm:mcd}
Let $Z_1, Z_2, \ldots, Z_m \in \mathcal{Z}$ be independent random variables and  $f: \mathcal{Z}^m \rightarrow \mathbb{R}$ be a function of $Z_1, Z_2, \ldots, Z_m$. If for all $i \in [m]$ and for all $z_1, z_2, \ldots, z_m, z_i' \in \mathcal{Z}$ the function $f$ satisfies
\begin{align*}
| f(z_1, \ldots, z_i, \ldots, z_m) - f(z_i, \ldots, z_i', \ldots, z_m) | & \leq c_i,
\end{align*}
then for $t >0$,
\begin{align*}
\Pr( |f - \E[f]| \geq t  ) & \leq 2 \ \textrm{exp} \left(\frac{-2t^2}{\sum_{i=1}^m c_i^2} \right).
\end{align*}
\end{theorem}

%\subsection{Beyond the Euclidean Norm}

%Next we prove Theorem~\ref{thm:caratheodory-p} that extends Theorem~\ref{thm:caratheodory-2} and, in particular, shows that for every vector $\mu$ in the convex hull there exists a $O\left(\frac{p \gamma^2}{\varepsilon^2}\right)$ uniform vector  $\mu'$ that is $\varepsilon$ close to $\mu$ in the $p$-norm for $p\geq 2$.

%of Theorem~\ref{thm:caratheodory-p}
%A key technical ingredient in our  proof is Kahane's inequality (see, e.g.,~\cite{garling}), which can be considered as a vector version of Khintchine inequality.

The following form of the inequality can be derived from a result stated in~\cite{So} (see Theorem 2 in~\cite{So} and references therein).
\end{comment}%%%%%%%%%%%%%%%%%%%%%%%%

\begin{theorem}[Khintchine Inequality]
\label{thm:Khintchine-v}
Let $r_1, r_2, \ldots, r_m$ be a sequence of i.i.d.\ Rademacher $\pm 1$ random variables, i.e., $\Pr( r_i = \pm 1) = \frac{1}{2} $
for all $i \in [m]$. In addition, let $u_1, u_2, \ldots, u_m \in \mathbb{R}^d$ be a deterministic sequence of vectors. Then, for $2 \leq p <\infty$

\begin{align}
\E  \left\| \sum_{i=1}^m r_i u_i \right\|_p  \leq  \sqrt{p} \ \left( \sum_{i=1}^m \|u_i\|_p^2 \right)^{\frac{1}{2}}.
\end{align}
\end{theorem}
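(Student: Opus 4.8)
The plan is to reduce the vector-valued inequality to the classical \emph{scalar} Khintchine inequality applied one coordinate at a time, and then to glue the $d$ coordinate estimates back together using Minkowski's inequality in $\ell_{p/2}$. Throughout, write $u_{i,j}$ for the $j$-th coordinate of $u_i \in \mathbb{R}^d$, so that by definition of the $p$-norm
\[
\left\| \sum_{i=1}^m r_i u_i \right\|_p^p = \sum_{j=1}^d \left| \sum_{i=1}^m r_i u_{i,j} \right|^p .
\]
I will take as given the scalar form (this is the ``result stated in~\cite{So}''): for real scalars $a_1, \ldots, a_m$ and $2 \le p < \infty$, $\;\E \left| \sum_i r_i a_i \right|^p \le p^{p/2} \left( \sum_i a_i^2 \right)^{p/2}$. (The constant can be obtained, for instance, from the hypercontractive bound $\| \sum_i r_i a_i \|_p \le \sqrt{p-1}\,\| \sum_i r_i a_i \|_2 = \sqrt{p-1}\,(\sum_i a_i^2)^{1/2}$ and $\sqrt{p-1} \le \sqrt{p}$.)

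First I would move the expectation inside the $p$-th root. Since $p \ge 1$, the map $t \mapsto t^{1/p}$ is concave on $[0,\infty)$, so Jensen's inequality gives
\[
\E \left\| \sum_i r_i u_i \right\|_p = \E \left( \sum_{j=1}^d \left| \sum_i r_i u_{i,j} \right|^p \right)^{1/p} \le \left( \sum_{j=1}^d \E \left| \sum_i r_i u_{i,j} \right|^p \right)^{1/p}.
\]
Applying the scalar Khintchine inequality inside the sum over $j$, with $a_i = u_{i,j}$, bounds this by
\[
\sqrt{p}\,\left( \sum_{j=1}^d \left( \sum_{i=1}^m u_{i,j}^2 \right)^{p/2} \right)^{1/p}.
\]

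It then remains to show $\left( \sum_j \left( \sum_i u_{i,j}^2 \right)^{p/2} \right)^{1/p} \le \left( \sum_i \| u_i \|_p^2 \right)^{1/2}$. Squaring, this is equivalent to
\[
\left( \sum_{j=1}^d \left( \sum_{i=1}^m u_{i,j}^2 \right)^{p/2} \right)^{2/p} \le \sum_{i=1}^m \left( \sum_{j=1}^d |u_{i,j}|^p \right)^{2/p},
\]
which is exactly the triangle inequality for the $\ell_{p/2}$-norm applied to the nonnegative vectors $b_i := \left( u_{i,j}^2 \right)_{j=1}^d$: the left side is $\left\| \sum_i b_i \right\|_{p/2}$, the right side is $\sum_i \| b_i \|_{p/2}$, and $\| b_i \|_{p/2} = \left( \sum_j |u_{i,j}|^p \right)^{2/p} = \| u_i \|_p^2$. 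Chaining the three displayed bounds yields the claim. Note that this last step is precisely where the hypothesis $p \ge 2$ is used, since Minkowski's inequality in $\ell_{p/2}$ requires the exponent $p/2$ to be at least $1$; for $p < 2$ the inequality (with this constant) would fail.

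The main obstacle — and essentially the only non-mechanical point — is recognizing the correct way to recombine the per-coordinate estimates. A naive union-bound-style argument over the $d$ coordinates would introduce a dimension-dependent factor and destroy the whole purpose of the inequality; identifying the double-sum rearrangement as Minkowski's inequality for $\ell_{p/2}$ is what keeps the bound dimension-free and simultaneously fixes the exponents. The factor $\sqrt{p}$ is simply inherited from the scalar inequality, so no additional work is needed to track the constant.
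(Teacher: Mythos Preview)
Your proof is correct. The three steps---Jensen to pull the expectation inside the $p$-th root, coordinate-wise scalar Khintchine, and then Minkowski in $\ell_{p/2}$ to recombine---all check out, and your identification of the Minkowski step as the place where $p\ge 2$ enters is exactly right.

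The paper takes a different route: it embeds each vector $u_i$ as the diagonal matrix $Q_i=\diag(u_i)$, observes that $\|Q_i\|_{S_p}=\|u_i\|_p$ and that $\sum_i r_i Q_i = \diag(\sum_i r_i u_i)$, and then invokes a \emph{matrix} Khintchine inequality (Theorem~2 of~\cite{So}) as a black box to obtain directly $\E\|\sum_i r_i u_i\|_p^p \le p^{p/2}\bigl(\sum_i\|u_i\|_p^2\bigr)^{p/2}$, finishing with Jensen. So the paper trades your explicit Minkowski-in-$\ell_{p/2}$ argument for a citation to a stronger operator-theoretic result. Your approach is more elementary and self-contained---it needs only the classical scalar Khintchine bound and the triangle inequality---whereas the paper's approach is shorter on the page but imports a heavier tool. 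One small remark: the ``result stated in~\cite{So}'' that the paper actually cites is the matrix inequality, not the scalar one you attribute to it; the scalar bound you use is of course standard on its own.
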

\begin{proof}
Given vector $v \in \mathbb{R}^d$, write $\diag(v)$ to denote the $d \times d$ diagonal matrix whose diagonal is equal to $v$. Note that if matrix $Q = \diag(v)$ then $\| Q \|_{S_p} = \| v \|_p$, where $\| Q \|_{S_p}$ denotes the Schatten $p$-norm of $Q$, i.e., $\| Q \|_{S_p} = \| \sigma(Q) \|_p$, where $\sigma(Q)$ is the vector of singular values of $Q$.  In addition, say we construct diagonal matrices for a sequence of vectors $u_1, u_2, \ldots, u_m \in \mathbb{R}^d$, i.e., set $Q_i = \diag(u_i)$ for all $i \in [m]$, then for any sequence of scalars $\xi_1, \xi_2, \ldots, \xi_m \in \mathbb{R}$ we have $\sum_{i=1}^m \xi_i Q_i  = \diag\left(  \sum_{i=1}^m \xi_i u_i  \right)$.

In order to prove the theorem statement for vectors $u_1, u_2, \ldots, u_m \in \mathbb{R}^d$, we simply use Theorem 2 of~\cite{So}. In particular,  setting  $Q_i = \diag(u_i)$ for all $i \in [m]$ in Theorem 2 of~\cite{So} and using the above stated observations we get:

\begin{align}
\E  \left\| \sum_{i=1}^m r_i u_i \right\|_p^p  \leq  p^{p/2} \ \left( \sum_{i=1}^m \|u_i\|_p^2 \right)^{\frac{p}{2}}. \label{ineq:So}
\end{align}

For $p\geq 2$ the $p$th root is a concave function; hence Jensen's inequality, applied to (\ref{ineq:So}), gives us the desired result.

\end{proof}

%%%%%%%%%%%%%%%%%%%%%%%%%%%%

We are ready to prove the main result of this section. Note that in the following theorem the scaling term $\gamma$ is defined with respect to the $p$ norm.

\vspace*{12pt}

\begin{theorem}
\label{thm:caratheodory-p}
Given a set of vectors $X= \{ x_1, x_2,\ldots, x_n \} \subset \mathbb{R}^d$ and $\varepsilon >0$. For every $\mu \in \conv(X)$ and $2 \leq p < \infty $ there exists an  $\frac{{4p} \gamma^2}{\varepsilon^2}$ uniform vector $\mu' \in \conv(X)$ such that $\| \mu - \mu' \|_p \leq \varepsilon$. Here, $\gamma := \max_{x \in X} \| x \|_p$.
\end{theorem}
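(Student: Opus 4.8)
The plan is to use the probabilistic method exactly as outlined in the introduction, with the Khintchine inequality (Theorem~\ref{thm:Khintchine-v}) doing the heavy lifting in place of a coordinate-wise concentration argument. Since $\mu \in \conv(X)$, write $\mu = \sum_{i=1}^n \lambda_i x_i$ with $\lambda \in \Delta^n$; this defines a probability distribution on $X$ with mean $\mu$. Draw $k$ i.i.d.\ samples $Z_1,\ldots,Z_k$ from this distribution (so each $Z_j$ equals some $x_i$ with probability $\lambda_i$), and set $\mu' = \frac{1}{k}\sum_{j=1}^k Z_j$. By construction $\mu'$ is $k$ uniform and lies in $\conv(X)$, so it suffices to show that for an appropriate $k$ we have $\E\,\|\mu - \mu'\|_p < \varepsilon$, which by the probabilistic method guarantees the existence of a realization with $\|\mu-\mu'\|_p \le \varepsilon$.

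The key step is bounding $\E\,\|\mu-\mu'\|_p = \frac{1}{k}\,\E\,\bigl\|\sum_{j=1}^k (Z_j - \mu)\bigr\|_p$. The standard route is symmetrization: introduce an independent i.i.d.\ copy $Z_1',\ldots,Z_k'$ and i.i.d.\ Rademacher signs $r_1,\ldots,r_k$, and use
\[
\E\,\Bigl\|\sum_{j=1}^k (Z_j - \mu)\Bigr\|_p \le \E\,\Bigl\|\sum_{j=1}^k (Z_j - Z_j')\Bigr\|_p = \E\,\Bigl\|\sum_{j=1}^k r_j (Z_j - Z_j')\Bigr\|_p \le 2\,\E\,\Bigl\|\sum_{j=1}^k r_j Z_j\Bigr\|_p,
\]
where the middle equality is because $Z_j - Z_j'$ is symmetric, and the first and last steps use Jensen / triangle inequality together with the fact that $\mu = \E Z_j'$. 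Now condition on the $Z_j$'s and apply Theorem~\ref{thm:Khintchine-v} to the deterministic vectors $u_j = Z_j$: this gives $\E_r\,\bigl\|\sum_j r_j Z_j\bigr\|_p \le \sqrt{p}\,\bigl(\sum_j \|Z_j\|_p^2\bigr)^{1/2} \le \sqrt{p}\,\sqrt{k}\,\gamma$, since $\|Z_j\|_p \le \gamma = \max_{x\in X}\|x\|_p$ deterministically. Taking expectation over the $Z_j$'s as well, we get $\E\,\|\mu-\mu'\|_p \le \frac{2\sqrt{p\,k}\,\gamma}{k} = \frac{2\sqrt{p}\,\gamma}{\sqrt{k}}$.

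Setting $k = \frac{4p\gamma^2}{\varepsilon^2}$ makes this bound equal to $\varepsilon$, so there exists a realization with $\|\mu-\mu'\|_p \le \varepsilon$; this $\mu'$ is $k$ uniform, completing the proof. (One should round $k$ up to an integer, which only helps; and if $\gamma = 0$ the statement is trivial.) The main obstacle, and the only genuinely delicate point, is the symmetrization step — one must be careful that after introducing the independent copy and the Rademacher signs, the quantity being bounded is a sum of the form $\sum_j r_j u_j$ with the $u_j$ deterministic \emph{once we condition}, so that Theorem~\ref{thm:Khintchine-v} applies; the factor of $2$ from symmetrization is exactly what produces the constant $4$ (rather than $1$) in the bound $\frac{4p\gamma^2}{\varepsilon^2}$. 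Everything else is bookkeeping with Jensen's inequality and linearity of expectation.
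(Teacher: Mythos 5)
Your proposal is correct and follows essentially the same route as the paper's own proof: i.i.d.\ sampling from the distribution induced by the convex combination, symmetrization with an independent copy and Rademacher signs (Jensen, symmetry of $Z_j - Z_j'$, triangle inequality yielding the factor $2$), and then the Khintchine inequality applied conditionally on the samples to get the $\sqrt{p}\,\gamma/\sqrt{k}$ bound. The bookkeeping, the choice $k = 4p\gamma^2/\varepsilon^2$, and the concluding appeal to the probabilistic method all match the paper.
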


\begin{proof}
Express  $\mu \in \conv(X)$ as a convex combination of $x_i$s: $ \mu =  \sum_{i=1}^n \alpha_i x_i$ where $\alpha_i \geq 0 $, for all
$i \in [n]$, and $\sum_{i=1}^n \alpha_i = 1$. Note that $\alpha = (\alpha_1, \alpha_2, \ldots, \alpha_n)$ corresponds to a
probability distribution over vectors $ x_1, x_2, \ldots, x_n$. That is, under probability distribution $\alpha$ vector $x_i$ is drawn
with probability $\alpha_i$. The vector $\mu$ is the mean of this distribution. Specifically, the $j$th component of $\mu$ is the
expected value of the random variable that takes value $x_{i,j}$ with probability $\alpha_i$, here $x_{i,j}$ is the $j$th component of
vector $x_i$. We succinctly express these component-wise equalities as follows:
\begin{align}
\E_{v \sim \alpha} [v] = \mu.
\end{align}

Let $v_1, v_2,\dots, v_m$ be $m$ i.i.d.\ draws from $\alpha$. The sample mean vector is defined to be $\frac{1}{m} \sum_{i=1}^m v_i$. Below we specify function $g: X^m \rightarrow \mathbb{R}$ to quantify the $p$-norm distance between the sample mean vector and the $\mu$.

\begin{align}
g(v_1,v_2,\ldots, v_m) := \left\| \frac{1}{m} \sum_{i=1}^m v_i - \mu \right\|_p.
\end{align}

The key technical part of the remainder of the proof is to show that \begin{align}
\E[g] \leq \frac{2 \sqrt{p} \  \gamma}{\sqrt{m}}. \end{align}

For $m = \frac{4p \gamma^2}{\varepsilon^2}$ this inequality reduces to $\E[g] \leq \varepsilon$. Therefore, when the number of samples $m = \frac{4p \gamma^2}{\varepsilon^2}$ we have $\Pr(g \leq \varepsilon) > 0$, i.e., with positive probability the sample mean vector is $\varepsilon$ close to $\mu$ in the $p$-norm. Overall, the stated claim is implied by the probabilistic method.

Recall that in expectation the sampled mean is equal to $\mu$, i.e., $\E_{v_1',\ldots, v_m' \sim \alpha} \frac{1}{m} \sum_{i=1}^m v_i' = \mu$. Hence,
we have

\begin{align}
\E[g] & = \E_{v_1,\ldots, v_m} \left\| \frac{1}{m} \sum_{i=1}^m v_i  \ - \ \mu \right\|_p \\
& = \E_{v_1,\ldots, v_m} \left\| \frac{1}{m} \sum_{i=1}^m v_i  \ - \  \E_{v_1',\ldots, v_m'} \frac{1}{m} \sum_{i=1}^m v_i' \right\|_p \\
& = \E_{v_1,\ldots, v_m} \left\| \E_{v_1',\ldots, v_m'} \left( \frac{1}{m} \sum_{i=1}^m v_i  \ - \   \frac{1}{m} \sum_{i=1}^m v_i' \right) \right\|_p.\label{ineq:break}
\end{align}

Note that $\| \cdot \|_p$ is convex for $p\geq 1$. Therefore, Jensen's inequality gives us:
\begin{align}
\E_{v_1,\ldots, v_m} \left\| \E_{v_1',\ldots, v_m'} \left( \frac{1}{m} \sum_{i=1}^m v_i  \ - \   \frac{1}{m} \sum_{i=1}^m v_i' \right) \right\|_p & \leq \E_{v_1,\ldots, v_m} \E_{v_1',\ldots, v_m'} \left\|  \left( \frac{1}{m} \sum_{i=1}^m v_i  \ - \   \frac{1}{m} \sum_{i=1}^m v_i' \right) \right\|_p \\
& = \frac{1}{m} \E_{\substack{v_1,\ldots, v_m \\ v_1',\ldots, v_m'}} \left\| \left( \sum_{i=1}^m \left( v_i  \ - \ v_i' \right) \right) \right\|_p. \label{ineq:sub}
\end{align}

Let $r_1, r_2, \ldots, r_m$ be a sequence of i.i.d.\ Rademacher $\pm 1$ random variables, i.e., $\Pr( r_i = \pm 1) = \frac{1}{2} $
for all $i \in [m]$. Since, for all $i \in [m]$, $v_i$ and $v_i'$ are i.i.d.\ copies we can write
\begin{align}
\frac{1}{m} \E_{v_i,v_i'} \left\| \left( \sum_{i=1}^m \left( v_i  - v_i' \right) \right) \right\|_p & = \frac{1}{m} \E_{v_i,v_i', r_i} \left\| \left( \sum_{i=1}^m r_i \left( v_i  - v_i' \right) \right) \right\|_p \nonumber \\
& \leq \frac{1}{m} \E_{v_i,v_i', r_i} \left[ \left\| \sum_{i=1}^m r_i v_i \right\|_p + \left\| \sum_{i=1}^m r_i v_i' \right\|_p \right] \quad \textrm{(Triangle inequality)} \nonumber \\
& = \frac{1}{m} \E_{r_i} \left[ \ \E_{v_i, v_i'} \left( \left\| \sum_{i=1}^m r_i v_i \right\|_p + \left\| \sum_{i=1}^m r_i v_i' \right\|_p \ \ \middle| \  r_1,..,r_m \right) \  \right] \ \ \textrm{(Tower property)} \nonumber \\
& = \frac{1}{m} \E_{r_i} \left[ \ \E_{v_i} \left( \left\| \sum_{i=1}^m r_i v_i \right\|_p \ \ \middle| \  r_1,..,r_m \right) \ + \  \E_{v_i'} \left( \left\| \sum_{i=1}^m r_i v_i' \right\|_p \ \ \middle|  \  r_1,..,r_m \right) \  \right] \nonumber \\
& = \frac{1}{m} \E_{r_i} \left[ \ 2 \ \E_{v_i} \left( \left\| \sum_{i=1}^m r_i v_i \right\|_p \ \ \middle| \  r_1,..,r_m \right) \  \right] \nonumber \\
& = 2 \ \E_{v_i, r_i} \left\| \sum_{i=1}^m r_i \ \frac{v_i}{m} \right\|_p. \label{ineq:rad}
\end{align}

The penultimate equality follows from the following ($v_i$s and $v_i'$s are i.i.d.\ copies)
\begin{align}
\E_{v_i} \left( \left\| \sum_{i=1}^m r_i v_i \right\|_p \ \ \middle| \  r_1,..,r_m \right) & =  \E_{v_i'} \left( \left\| \sum_{i=1}^m r_i v_i' \right\|_p \ \ \middle|  \  r_1,..,r_m \right).
\end{align}

Overall, inequalities (\ref{ineq:break}), (\ref{ineq:sub}), and (\ref{ineq:rad}) imply
\begin{align}
\E[g] \leq 2 \ \E_{v_i, r_i} \left\| \sum_{i=1}^m r_i \ \frac{v_i}{m} \right\|_p \label{ineq:exprad},
\end{align}
where $r_1, r_2, \ldots, r_m$ is a sequence of i.i.d.\ Rademacher $\pm 1$ random variables.

At this point we can apply Theorem~\ref{thm:Khintchine-v} (Khintchine inequality) with $u_i = \frac{v_i}{m}$ to obtain
\begin{align}
\E_{v_i, r_i} \left\| \sum_{i=1}^m r_i \ \frac{v_i}{m} \right\|_p & = \E_{v_i} \left[ \ \E_{r_i} \left[ \left\| \sum_{i=1}^m r_i \ \frac{v_i}{m} \right\|_p \ \ \middle| \ v_1,..,v_m \right] \  \right] \\
& \leq \E_{v_i} \left[   \sqrt{p} \ \left( \sum_{i=1}^m \left\| \frac{v_i}{m} \right\|_p^2 \right)^{1/2} \right] \\
& \leq \E_{v_i} \left[  \sqrt{p} \ \left( \sum_{i=1}^m \frac{\gamma^2}{m^2} \right)^{1/2} \right] \label{ineq:normuse} \\
& =  \sqrt{p} \ \frac{\gamma}{\sqrt{m}}. \label{ineq:finalexp}
\end{align}

Inequality (\ref{ineq:normuse}) uses the fact that random vectors $v_i$ are supported over $X$, so $\|v_i\|^2_p \leq \gamma^2$.

Using (\ref{ineq:exprad}) and (\ref{ineq:finalexp}) we get
\begin{align}
\E[g] \leq \frac{2 \sqrt{p} \ \gamma}{\sqrt{m}}.
\end{align}

This completes the proof. 

%For sample size $m = O\left(\frac{{p} \gamma^2}{\varepsilon^2}\right)$ we have $\E[g] \leq \varepsilon/2$. Setting $t = \frac{\varepsilon}{2}$ in inequality (\ref{ineq:mcd1}) gives us:
%\begin{align}
%\Pr\left( g \geq \varepsilon \right) \leq 2 e^{-2}.
%\end{align}

%Therefore, with positive probability $\|\frac{1}{m} \sum_{i=1}^m v_i - \mu \|_p  \leq \varepsilon$.
%By the probabilistic method we get that there exists a vector $\mu'$ that is $\varepsilon$ close
%to $\mu$ in $p$ norm and can be expressed as a convex combination of at most $m = O\left(\frac{{p} \gamma^2}{\varepsilon^2}\right)$ vectors of $X$.

\end{proof}

We end this section by stating an $\infty$-norm variant of our result. This theorem follows directly from Hoeffding's inequality. Note that in the following theorem the scaling of vectors in $X$ is with respect to the $\infty$-norm.
% and can be considered as a restatement of the result by Lipton et al.~\cite{LMM}

\begin{theorem}
\label{thm:caratheodory-infty}
Given a set of vectors $X= \{ x_1, x_2,\ldots, x_n \} \subset \mathbb{R}^d$, with $ \max_{x \in X} \| x \|_\infty \leq 1$, and $\varepsilon >0$. For every $\mu \in \conv(X)$ there exists an $O\left( \frac{\log n}{\varepsilon^2}\right)$ uniform vector $\mu' \in \conv(X)$ such that $\| \mu - \mu' \|_\infty \leq \varepsilon$.
\end{theorem}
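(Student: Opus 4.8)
The plan is to reuse the probabilistic-method skeleton of Theorem~\ref{thm:caratheodory-p}, replacing the Khintchine-based control of the $p$-norm of the deviation by an elementary coordinatewise concentration bound. First I would write $\mu \in \conv(X)$ as a convex combination $\mu = \sum_{i=1}^n \alpha_i x_i$, so that $\alpha$ is a probability distribution on $[n]$ with $\E_{v \sim \alpha}[v] = \mu$, draw $m$ i.i.d.\ samples $v_1,\dots,v_m \sim \alpha$, and set $\mu' := \frac{1}{m}\sum_{i=1}^m v_i$. By construction $\mu'$ is $m$ uniform with respect to $X$ (it is the average of $m$ vectors of $X$) and $\E[\mu'] = \mu$, so it only remains to choose $m = O(\log n/\varepsilon^2)$ for which $\Pr(\|\mu' - \mu\|_\infty \le \varepsilon) > 0$.

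Next, fixing a coordinate $j$, I would note that the variables $(v_1)_j,\dots,(v_m)_j$ are i.i.d., each lies in $[-1,1]$ since $\|x\|_\infty \le 1$ for every $x \in X$, and they have common mean $\mu_j$; Hoeffding's inequality then gives $\Pr(|\mu'_j - \mu_j| > \varepsilon) \le 2\exp(-m\varepsilon^2/2)$. A union bound over the coordinates yields
\[
\Pr\!\big(\|\mu' - \mu\|_\infty > \varepsilon\big) \;\le\; 2\,d\,\exp(-m\varepsilon^2/2),
\]
and choosing $m = \big\lceil \tfrac{2}{\varepsilon^2}(\ln(2d)+1)\big\rceil$ drives the right-hand side strictly below $1$. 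The probabilistic method then furnishes a realization of the multiset $\{v_1,\dots,v_m\}$ giving a vector $\mu'$ that is $m$ uniform with $\|\mu'-\mu\|_\infty \le \varepsilon$. In the settings of this paper the ambient dimension satisfies $d = n$ (columns of an $n\times n$ matrix, adjacency matrices of $n$-vertex graphs), and in general $d$ is polynomially bounded in $n$, so $\ln d = O(\log n)$ and the exponent is $O(\log n/\varepsilon^2)$, as stated.

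I do not expect a genuine obstacle here: this is essentially the Lipton--Markakis--Mehta ``$k$-uniform strategies'' sampling argument recast in convex-geometric language, and the only things to get right are the (immediate) observation that the sample mean is automatically $m$ uniform and the bookkeeping in the choice of $m$ so the union bound closes. The point worth flagging is the contrast with Theorem~\ref{thm:caratheodory-p}: routing through the coordinates inevitably costs a logarithm in their number, whereas the Khintchine argument is genuinely dimension-free --- the logarithmic factor is the price of working with the $\infty$-norm directly. One could instead derive a $\log d$ bound straight from Theorem~\ref{thm:caratheodory-p} by taking $p = \Theta(\log d)$, since then $\gamma = \max_{x\in X}\|x\|_p \le d^{1/p} = \Theta(1)$ and $\|\mu-\mu'\|_\infty \le \|\mu-\mu'\|_p \le \varepsilon$; but the Hoeffding route is more direct.
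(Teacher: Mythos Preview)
Your proposal is correct and is exactly the paper's approach: the paper's entire proof reads ``Apply Hoeffding's inequality component wise and take union bound.'' You are also right to flag that this argument naturally produces $O(\log d/\varepsilon^2)$ rather than the stated $O(\log n/\varepsilon^2)$; the paper's one-line proof does not address this either, and the statement is evidently written with the paper's applications in mind, where the vectors are columns of an $n\times n$ matrix so that $d=n$.
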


\begin{proof}
Apply Hoeffding's inequality component wise and take union bound.
\end{proof}

\section{Computing Approximate Nash Equilibrium}
\label{sect:nash}

\textbf{Bimatrix Games.} Bimatrix games are two player games in normal
form.  Such games are specified by a pair of $n \times n$ matrices $(A, B)$, which are termed the payoff matrices for the
players. The first player, also called the row player, has payoff
matrix $A$, and the second player, or the column player, has payoff
matrix $B$. The strategy set for each player is
$[n]=\{1,2,\ldots,n\}$, and, if the row player plays strategy $i$ and
column player plays strategy $j$, then the payoffs of the two players
are $A_{ij}$ and $B_{ij}$ respectively. The payoffs of the players are normalized between $-1$ and $1$, so $A_{ij}, B_{ij} \in [-1,1]$ for all $i,j \in [n]$.

Recall that $\Delta^n$ is the set of probability distributions over the set of pure strategies $[n]$. We use $e_i \in \mathbb{R}^n$ to denote the vector with $1$ in the $i$th coordinate and $0$'s elsewhere. The players can randomize over their strategies by selecting any probability distribution in $\Delta^n$, called a mixed strategy. When the row and column players play mixed strategies $x$ and $y$ respectively, the expected payoff of the row player is $x^T A y $ and the expected payoff of the column player is $x^T By$.

\begin{definition}[Nash Equilibrium]
A mixed strategy pair $(x,y)$,  $ x,y \in \Delta^n$, is said to be a Nash equilibrium if and only if:
\begin{align}
x^T A y &  \geq e_i^T A y \qquad \forall \ i \in [n] \ \ \textrm{ and }\\
x^T B y & \geq x^T B e_j \qquad \forall \ j \in [n].
\end{align}
\end{definition}

By definition, if $(x,y)$ is a Nash equilibrium neither the row player nor the column player can benefit, in expectation, by unilaterally deviating to some other strategy. We say that a mixed strategy pair is an $\varepsilon$-Nash equilibrium is no player can benefit more than $\varepsilon$, in expectation, by unilateral deviation. Formally,

\begin{definition}[$\varepsilon$-Nash Equilibrium]
A mixed strategy pair $(x,y)$,  $ x,y  \in \Delta^n$, is said to be an  $\varepsilon$-Nash equilibrium if and only if:
\begin{align}
x^T A y &  \geq e_i^T A y - \varepsilon \qquad \forall \ i \in [n] \ \ \textrm{ and }\\
x^T B y & \geq x^T B e_j - \varepsilon  \qquad \forall \ j \in [n].
\end{align}
\end{definition}

Throughout, we will write $C$ to denote the sum of payoff matrices, $C:=A+B$. We will denote the $i$th column of $C$ by $C^i$, for $i \in [n]$. Note that $\| C^i \|_0$ is equal to the number of non-zero entries in the $i$th column of $C$.

In the following definition we ensure that the sparsity parameter $s$ is at least $4$ for ease of presentation. In particular, the running time of our algorithm depends on the $\log$ of the number of non-zero entries in the columns of $C$, i.e., $\log$ of the sparsity of the columns of matrix $C$. Setting $s \geq 4$ gives us $\log s \geq 2$. This allows us to state a single running-time bound, which holds even for corner cases wherein the column sparsity is, say, zero (and hence $\log (\max_{i} \| C^i \|_0)$ is undefined).

\begin{definition}[$s$-Sparse Games]
The sparsity of a game $(A,B)$ is defined to be \[ s := \max\{ \max_{i} \| C^i \|_0,  4 \}, \]where matrix $C=A+B$.
\end{definition}

%\begin{remark}
%Note that there are additive terms in the exponent $O \left( \frac{\log s}{\varepsilon^2} \right)$. In particular, when $s=0$ (i.e., $(A,B)$ is a zero-sum game) then the exponent simply depends on $1/\varepsilon^2$.
%\end{remark}

The quantitative connection between the sparsity of a game and the time it takes to compute an $\varepsilon$-Nash equilibrium is stated below.
% and $\varepsilon >0$
\begin{theorem}
\label{thm:nash}
Let  $A, B \in [-1,1]^{n \times n}$ be the payoff matrices of an $s$-sparse bimatrix game. Then, an $\varepsilon$-Nash equilibrium of $(A,B)$ can be computed in time
\begin{align*}
%\left( \frac{n}{\varepsilon^2} \right)^{ O \left( \frac{\log s}{\varepsilon^2} \right) }.
n^{ O \left( \frac{\log s}{\varepsilon^2} \right) }.
\end{align*}
\end{theorem}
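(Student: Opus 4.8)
The plan is to reduce the computation of an $\varepsilon$-Nash equilibrium to solving, approximately, the Mangasarian--Stone bilinear program, and then to use Theorem~\ref{thm:caratheodory-p} to make that approximate solution searchable in quasi-polynomial time. Recall that a pair $(x,y) \in \Delta^n \times \Delta^n$ is a Nash equilibrium of $(A,B)$ if and only if it is an optimal solution of the bilinear program that maximizes $x^T A y + x^T B y - \max_i e_i^T A y - \max_j x^T B e_j$ over $x, y \in \Delta^n$, and the optimal value is $0$. More usefully, one shows a robust version: if $(x,y)$ is feasible and the corresponding objective is within $\varepsilon$ of $0$, then $(x,y)$ is an $O(\varepsilon)$-Nash equilibrium. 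So it suffices to find such a near-optimal feasible pair efficiently.

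First I would fix an exact Nash equilibrium $(\hat x, \hat y)$ (which exists by Nash's theorem) and set $u := C \hat y$ where $C = A + B$. Observe that $C \hat y = \sum_j \hat y_j C^j$ lies in $\conv\{C^1, \dots, C^n\}$, the convex hull of the columns of $C$. The crucial normalization step: since $A_{ij}, B_{ij} \in [-1,1]$ we have $|C_{ij}| \le 2$, and since each column of $C$ has at most $s$ nonzero entries, taking $p = \log s$ (which is $\ge 2$ as $s \ge 4$) gives $\|C^j\|_p \le (s \cdot 2^p)^{1/p} = 2 \cdot 2^{(\log s)/p} = 4 =: \gamma$. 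Now apply Theorem~\ref{thm:caratheodory-p} with this $X$, this $p$, this $\gamma$, and approximation parameter (a constant times) $\varepsilon$: there is a $k$-uniform vector $u' \in \conv\{C^j\}$ with $\|C\hat y - u'\|_p \le \varepsilon$, where $k = \frac{4p\gamma^2}{\varepsilon^2} = O\!\left(\frac{\log s}{\varepsilon^2}\right)$. Being $k$-uniform means $u' = \frac{1}{k}\sum_{i \in S} C^i$ for a size-$k$ multiset $S \subseteq [n]$.

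Next I would describe the algorithm: enumerate all size-$k$ multisets $S$ of $[n]$ — there are $\binom{n+k-1}{k} = n^{O(k)} = n^{O(\log s / \varepsilon^2)}$ of them. For each candidate $u' = \frac1k \sum_{i\in S} C^i$, solve the linear program in variables $(x,y) \in \Delta^n \times \Delta^n$ obtained from the bilinear program by replacing the bilinear term $x^T C y$ with the linear term $x^T u'$ and adding the constraint $C y = u'$ (equivalently $A y + B y = u'$); the deviation terms $\max_i e_i^T A y$ and $\max_j x^T B e_j$ are handled by standard epigraph variables, so this is genuinely a polynomial-size LP. For the correct guess $S$ (the one witnessing the Carathéodory approximation to $C\hat y$), the pair $(\hat x, \hat y)$ together with $u = C\hat y$ is "almost feasible": one needs to argue that replacing $Cy = u'$ by a guess within $p$-norm $\varepsilon$ of $C\hat y$ still admits a feasible $(x,y)$ with near-optimal objective, and that any feasible solution of the LP yields an $O(\varepsilon)$-Nash equilibrium of $(A,B)$. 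This uses that for $y \in \Delta^n$ and any vector $w$, $|x^T C y - x^T w| \le \|x\|_1 \|Cy - w\|_\infty \le \|Cy - w\|_p$ (since $\|x\|_1 = 1$ and $\|\cdot\|_\infty \le \|\cdot\|_p$), so perturbing $u$ by $\varepsilon$ in $p$-norm perturbs all the relevant quantities by $O(\varepsilon)$. Outputting the best $(x,y)$ found over all guesses gives the desired $\varepsilon$-Nash equilibrium (after rescaling $\varepsilon$ by the appropriate constant).

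The main obstacle I anticipate is not the Carathéodory application itself — that is essentially a black-box plug-in once the norm $p = \log s$ and the column normalization are set up — but rather the careful bookkeeping showing that an approximate/guessed $u'$ leads, through the LP, to a genuine $\varepsilon$-Nash equilibrium. Concretely: (i) one must verify the LP is always feasible for the "right" guess, which requires choosing $u'$ to be the Carathéodory approximant of $C\hat y$ and checking $(\hat x, \hat y)$ nearly satisfies $Cy = u'$, then projecting/adjusting; (ii) one must convert a near-optimal BP value back into the $\varepsilon$-Nash guarantee via the Mangasarian--Stone equivalence and its approximate version; and (iii) one must track how the constant factors in "$\varepsilon$" propagate through the chain (Carathéodory uses $\varepsilon$, the $p$-norm-to-$\infty$-norm bound, and the BP-to-Nash translation) so that the final statement has a clean $n^{O(\log s/\varepsilon^2)}$ bound. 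I would handle this by first stating and proving the robust Mangasarian--Stone lemma as a standalone claim, then stating the LP and proving feasibility-for-the-right-guess and the per-guess soundness separately, and only then assembling the running-time count.
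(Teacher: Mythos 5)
Your proposal follows the paper's argument essentially step for step: the Mangasarian--Stone bilinear program and its robust version (the paper's Lemma~\ref{lem:nash}), the observation that $C\hat y\in\conv(\{C^1,\dots,C^n\})$, the normalization $\|C^j\|_p\le (s\cdot 2^p)^{1/p}=4$ for $p=\log s$, the application of Theorem~\ref{thm:caratheodory-p} to get an $O(p/\varepsilon^2)$-uniform approximant, the $n^{O(\log s/\varepsilon^2)}$ enumeration of multisets, and the H\"{o}lder/$\ell_\infty$ bound $|x^T(Cy-w)|\le\|x\|_1\|Cy-w\|_\infty\le\|Cy-w\|_p$ for transferring objective values. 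So in spirit this is the paper's proof.

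The one step that would fail as literally stated is imposing the \emph{hard} constraint $Cy=u'$ in the per-guess program. For the ``right'' guess, the equilibrium $(\hat x,\hat y)$ satisfies only $\|C\hat y-u'\|_p\le\varepsilon$, not equality, so it is infeasible; and the $y$'s that do satisfy $Cy=u'$ exactly (e.g., the uniform distribution over the guessed multiset $S$) need not have small deviation terms $\max_i e_i^TAy$ relative to $x^Tu'$, so feasibility-with-near-optimal-value is not guaranteed. Your ``projecting/adjusting'' remark does not repair this: the deviation terms are controlled by $\|y-\hat y\|_1$, which is not controlled by $\|Cy-C\hat y\|_p$ being small. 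The paper sidesteps the issue by treating the constraint softly: CP($u$) \emph{minimizes} $\|Cy-u\|_p$ over the always-feasible region given by the (BP) constraints together with $x^Tu\ge\pi_1+\pi_2-\varepsilon/2$, and a guess is accepted whenever the minimum falls below $\varepsilon/2$; then $(\hat x,\hat y,\hat\pi_1,\hat\pi_2)$ is feasible with objective at most $\varepsilon/2$ for the right guess (completeness), and H\"{o}lder gives $x^TCy\ge x^Tu-\varepsilon/2\ge\pi_1+\pi_2-\varepsilon$ for any accepted solution (soundness). Replacing your equality constraint by the inequality $\|Cy-u'\|_p\le\varepsilon/2$, or moving it into the objective as the paper does, completes your argument; the rest of your bookkeeping is exactly what the paper carries out.
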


\vspace*{10pt}

Our algorithm for computing $\varepsilon$-Nash equilibrium relies on the following bilinear program, which was formulated by Mangasarian and Stone~\cite{MS}. As formally specified in Lemma~\ref{lem:nash} below, approximate solutions of this bilinear program correspond to approximate Nash equilibria.

\begin{align}
\max_{x,y, \pi_1, \pi_2} & \ \ \ x^TCy - \pi_1 - \pi_2  \nonumber \\
\textrm{subject to } \ \ \ & x^T B  \leq \mathbbm{1}^T \pi_2   \nonumber \\
&  A y  \leq \mathbbm{1} \pi_1  \nonumber \\
& x , y \in \Delta^n \nonumber \\
& \pi_1, \pi_2 \in [-1,1]. \tag{BP}
\end{align}

Here $\mathbbm{1}$ denotes the all-ones vector.  Using the definition of Nash equilibrium one can show that the optimal solutions of this bilinear program correspond to Nash equilibria of the game $(A,B)$. Formally, we have

\begin{theorem}[Equivalence Theorem~\cite{MS}]
\label{thm:ms}
Mixed strategy pair $(\hat{x}, \hat{y})$ is a Nash equilibrium of the game $(A,B)$ if and only if $\hat{x}$, $\hat{y}$, $\hat{\pi}_1$, and $\hat{\pi}_2$ form an optimal solution of the bilinear program (BP), for some scalars $\hat{\pi}_1$ and $\hat{\pi}_2$. In addition, the optimal value achieved by (BP) is equal to zero and the payoffs of the row and column player at this equilibrium are $\hat{\pi}_1$ and $\hat{\pi}_2$ respectively.
\end{theorem}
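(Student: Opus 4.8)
The plan is to split the bilinear objective along $C=A+B$ and to use that, for $x,y\in\Delta^n$, each player's expected payoff $x^TAy$ and $x^TBy$ is a convex combination of the payoffs obtained from pure deviations. Concretely, I would rewrite the objective of (BP) as $\bigl(x^TAy-\pi_1\bigr)+\bigl(x^TBy-\pi_2\bigr)$ and analyse the two brackets separately; almost everything follows from showing each bracket is $\le 0$ on the feasible region.

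\textbf{Step 1 (feasibility and an upper bound of $0$).} For arbitrary $x,y\in\Delta^n$ I would exhibit the feasible point obtained by setting $\pi_1:=\max_i e_i^TAy$ and $\pi_2:=\max_j x^TBe_j$; since $A_{ij},B_{ij}\in[-1,1]$ and $x,y$ are distributions, $\pi_1,\pi_2\in[-1,1]$, and by construction $Ay\le\mathbbm{1}\pi_1$ and $x^TB\le\mathbbm{1}^T\pi_2$. Thus (BP) is feasible, and because its feasible region $\Delta^n\times\Delta^n\times[-1,1]\times[-1,1]$ is compact and the objective continuous, the maximum is attained. For the bound, at any feasible point the constraint $Ay\le\mathbbm{1}\pi_1$ gives $e_i^TAy\le\pi_1$ for every $i$, hence $x^TAy=\sum_i x_i\,(e_i^TAy)\le\pi_1$; symmetrically $x^TBy\le\pi_2$. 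So both brackets are $\le 0$ and the objective is $\le 0$.

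\textbf{Step 2 (equality case equals the Nash conditions).} Next I would observe that a feasible point has objective value $0$ iff both non-positive brackets vanish, i.e.\ $\pi_1=x^TAy$ and $\pi_2=x^TBy$. Substituting $\pi_1=x^TAy$ into $e_i^TAy\le\pi_1$ yields $x^TAy\ge e_i^TAy$ for all $i$, and likewise $x^TBy\ge x^TBe_j$ for all $j$ — exactly the defining inequalities of a Nash equilibrium. This single observation delivers both directions. For the forward direction: given a Nash equilibrium $(\hat{x},\hat{y})$, put $\hat{\pi}_1:=\hat{x}^TA\hat{y}\in[-1,1]$ and $\hat{\pi}_2:=\hat{x}^TB\hat{y}\in[-1,1]$; the Nash inequalities are precisely the constraints $A\hat{y}\le\mathbbm{1}\hat{\pi}_1$ and $\hat{x}^TB\le\mathbbm{1}^T\hat{\pi}_2$, the box constraints hold, and the objective evaluates to $\hat{x}^TC\hat{y}-\hat{\pi}_1-\hat{\pi}_2=0$, hence optimal by Step 1. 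For the converse: a Nash equilibrium exists (Nash's theorem) and by the previous sentence gives a feasible point of value $0$, so together with Step 1 the optimal value of (BP) equals $0$; therefore every optimal $(\hat{x},\hat{y},\hat{\pi}_1,\hat{\pi}_2)$ has objective $0$, and by the equality characterisation $(\hat{x},\hat{y})$ is a Nash equilibrium with $\hat{\pi}_1=\hat{x}^TA\hat{y}$ and $\hat{\pi}_2=\hat{x}^TB\hat{y}$, i.e.\ the two players' payoffs at that equilibrium.

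\textbf{Main obstacle.} There is no deep difficulty here: once the objective is decomposed as $(x^TAy-\pi_1)+(x^TBy-\pi_2)$ the proof is essentially bookkeeping. The two points that need care are (i) keeping the auxiliary scalars $\pi_1,\pi_2$ inside $[-1,1]$, which is exactly where the payoff normalisation $A_{ij},B_{ij}\in[-1,1]$ is used, and (ii) the assertion that the optimal value is $0$, which relies on the existence of Nash equilibria (Nash's theorem) — or, for a self-contained account, on separately producing a value-$0$ feasible point via a fixed-point argument.
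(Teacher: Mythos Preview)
Your argument is correct, and it is essentially the standard Mangasarian--Stone proof: decompose the objective as $(x^TAy-\pi_1)+(x^TBy-\pi_2)$, show each term is nonpositive on the feasible set, and identify the equality case with the Nash best-response inequalities. Note, however, that the paper does not actually prove this theorem; it is stated with a citation to~\cite{MS} and only the one-line remark ``using the definition of Nash equilibrium one can show\ldots'' is given. So there is nothing to compare against beyond that hint, which your proposal faithfully carries out.
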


%Here, $\pi_1^*$ and $\pi_2^*$ are the expected payoffs of the row and column player respectively.
A relevant observation is that that an approximate solution of the bilinear program corresponds to an $\varepsilon$-Nash equilibrium.
%Note that the optimal value of the bilinear program is zero.
%In particular, the definition of $\varepsilon$-Nash equilibrium implies that
%A mixed strategy pair $(x,y) \in \Delta^n \times \Delta^n$ is an $\varepsilon$-Nash equilibrium of the game $(A,B)$ if and only if $x, y$ achieve an objective function value of more than $-2\varepsilon$ for the bilinear program (BP) for some scalars $\pi_1$ and $\pi_2$.

\begin{lemma}
\label{lem:nash}
Let $x, y \in \Delta^n$ along with scalars $\pi_1$ and $\pi_2$ form a feasible solution of (BP) that achieves an objective function value more than $-\varepsilon$, i.e., $x^TC y \geq \pi_1 + \pi_2 - \varepsilon$. Then, $(x,y)$ is an $\varepsilon$-Nash equilibrium of the game $(A,B)$.
\end{lemma}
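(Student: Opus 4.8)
The plan is to unpack the feasibility constraints of (BP) together with the hypothesis $x^T C y \geq \pi_1 + \pi_2 - \varepsilon$ and translate them directly into the two defining inequalities of an $\varepsilon$-Nash equilibrium. First I would record what feasibility gives us: since $x \in \Delta^n$ (so $\mathbbm{1}^T x = 1$) and the constraint $A y \leq \mathbbm{1} \pi_1$ holds coordinatewise, left-multiplying by $x^T \geq 0$ yields $x^T A y \leq \pi_1$; similarly, from $x^T B \leq \mathbbm{1}^T \pi_2$ and $y \in \Delta^n$ we get $x^T B y \leq \pi_2$. Also, for any pure strategy $i$, the $i$th coordinate of $A y \leq \mathbbm{1}\pi_1$ reads $e_i^T A y \leq \pi_1$, and likewise $x^T B e_j \leq \pi_2$ for every $j$.

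The core step is then a short chain of inequalities. To check the row-player condition, fix $i \in [n]$ and write
\begin{align*}
e_i^T A y \ \leq \ \pi_1 \ \leq \ \pi_1 + \pi_2 - x^T B y \ \leq \ \bigl( x^T C y + \varepsilon \bigr) - x^T B y \ = \ x^T A y + \varepsilon,
\end{align*}
where the second inequality uses $x^T B y \leq \pi_2$, the third uses the hypothesis $\pi_1 + \pi_2 \leq x^T C y + \varepsilon$, and the final equality uses $C = A + B$ so that $x^T C y - x^T B y = x^T A y$. This is exactly $x^T A y \geq e_i^T A y - \varepsilon$. The column-player condition is symmetric: for fixed $j$,
\begin{align*}
x^T B e_j \ \leq \ \pi_2 \ \leq \ \pi_1 + \pi_2 - x^T A y \ \leq \ \bigl( x^T C y + \varepsilon \bigr) - x^T A y \ = \ x^T B y + \varepsilon,
\end{align*}
giving $x^T B y \geq x^T B e_j - \varepsilon$. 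Since $i$ and $j$ were arbitrary, $(x,y)$ satisfies both families of inequalities in the definition of an $\varepsilon$-Nash equilibrium.

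I do not anticipate a genuine obstacle here — this is essentially a bookkeeping argument and the only things to be careful about are (i) that multiplying a coordinatewise inequality by the nonnegative vector $x$ (resp.\ $y$) preserves direction, and (ii) correctly combining the two "primal" bounds $x^T A y \leq \pi_1$, $x^T B y \leq \pi_2$ with the objective-value hypothesis. The mild subtlety worth a sentence in the writeup is that we never need the $\pi_1, \pi_2 \in [-1,1]$ box constraints for this direction; they matter only for the converse/normalization statement in Theorem~\ref{thm:ms}, not for Lemma~\ref{lem:nash}.
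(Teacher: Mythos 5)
Your proof is correct and follows essentially the same route as the paper: both arguments use feasibility to bound $e_i^T A y$, $x^T B e_j$, $x^T A y$, and $x^T B y$ by $\pi_1$ and $\pi_2$, then combine these with the objective-value hypothesis $x^T C y \geq \pi_1 + \pi_2 - \varepsilon$ to conclude $x^T A y \geq \pi_1 - \varepsilon$ and $x^T B y \geq \pi_2 - \varepsilon$ (which you do implicitly inside a single chain of inequalities). The observation that the box constraints $\pi_1,\pi_2 \in [-1,1]$ are not needed here is accurate.
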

\begin{proof}
The feasibility of $x,y$ implies that $ \max_j x^T B e_j \leq \pi_2$ and $\max_i e_i^T A y \leq \pi_1$. Since the objective function value achieved by $x,y$ is at least $-\varepsilon$ we have $x^T A y + x^T B y -  \pi_1 - \pi_2 \geq - \varepsilon$. But, $x^T A y $ is at most $\pi_1$ and $x^T B y $ is at most $\pi_2$.
So, the following inequalities must hold
\begin{align}
x^T A y & \geq \pi_1 - \varepsilon \qquad \textrm{and} \\
x^T B y & \geq \pi_2 - \varepsilon.
\end{align}

Overall, we get $x^T A y \geq \max_i e_i^T A y  - \varepsilon$ and $x^T B y  \geq \max_j x^T B e_j  - \varepsilon$. Hence $(x,y)$ satisfies the definition of an $\varepsilon$-Nash equilibrium.
%So, without loss of generality we can set $\pi_2 = \max_j x^T B e_j $ and $\pi_1 = \max_j x^T A e_j$; this maintains feasibility and does not decrease the objective function value.
%achieve an objective function value of more than $-\varepsilon$ we Note that without loss of generality we can set $Given that $x,y$  are feasible an $\varepsilon$-Nash equilibrium, then setting $\pi_1 = \max_i e_i^T A y $ and $\pi_2 = \max_j x^T B e_j$ we get that $x$ and $y$ form a feasible solution. By definition of $\varepsilon$-Nash, $x$ and $y$ satisfy $x^T A y \geq \max_i e_i^T A y  - \varepsilon = \pi_1 - \varepsilon$ and
\end{proof}

\begin{comment}%%%%%%%%%%%%%
Our algorithm solves the following programs multiple times. The programs are parametrized with respect to vector $u \in \mathbb{R}^n$ along with scalars $\pi_1$ and $\pi_2$. Note that for fixed  $u$, $\pi_1$, and $\pi_2$, LP(${u},\pi_2$) is a linear program and CP(${u},\pi_1$) is a convex program.\footnote{Given a vector $u \in \mathbb{R}^n$ and matrix $C \in \mathbb{R}^{n \times n}$, the function $f(x) := \| Cx - u \|_{p}$ is convex.} Hence, the programs can be solved in polynomial time.

\begin{align}
\max_{x} & \ \ \ x^T{u}   \tag{\bf LP(${u},\pi_2$)} \\
\textrm{subject to } \ \ \ & x^T B  \leq \mathbbm{1}^T \pi_2   \nonumber \\
& x \in \Delta^n \nonumber
\end{align}

\begin{align}
\min_{y} & \ \ \ \| Cy - u \|_p   \tag{\bf CP(${u},\pi_1$)} \\
\textrm{subject to } \ \ \ & Ay   \leq \mathbbm{1} \pi_1   \nonumber \\
& y \in \Delta^n \nonumber
\end{align}

\end{comment}%%%%%%%%%%%%%%

The proposed algorithm (see Algorithm~\ref{alg:nash}) solves the following $p$-norm minimization problem, with $p \geq 2$. The program CP($u$) is parametrized by vector $u \in \mathbb{R}^n$ and it can be solved in polynomial time.\footnote{Note that for fixed  $u$, CP($u$) is a convex program. Specifically, given $u \in \mathbb{R}^n$ and matrix $C \in \mathbb{R}^{n \times n}$, for $p \geq 1$, the function $f(x) := \| Cx - u \|_{p}$ is convex.}

\begin{align}
\min_{x,y,\pi_1,\pi_2} \ \ & \ \ \ \| Cy - u \|_p   \tag{\bf CP($u$)} \\
\textrm{subject to } \ \ \ & x^T u \geq \pi_1 + \pi_2 - \varepsilon/2 \nonumber \\
& Ay   \leq \mathbbm{1} \pi_1   \nonumber \\
& x^T B  \leq \mathbbm{1}^T \pi_2   \nonumber \\
& x,y \in \Delta^n \nonumber  \\
& \pi_1, \pi_2 \in [-1,1]. \nonumber
\end{align}

\begin{algorithm}{Given payoff matrices $A,B \in [-1,1]^{n \times n}$ and $\varepsilon >0$; Return: $\varepsilon$-Nash equilibrium of $(A,B)$}
\caption{Algorithm for computing $\varepsilon$-Nash equilibrium in $s$-sparse games}
  \begin{algorithmic}[1]
   \label{alg:nash}
   \STATE Write $s$ to denote the sparsity of the game $(A,B)$ and let $p = \log s $. \label{step:assign}\\
   \COMMENT{Note that, by definition, $ s\geq 4$; hence, $p \geq 2$.}
   \STATE Let $\mathcal{U}$ be the collection of all multisets of $\{1,2,\ldots, n\}$ of cardinality at most $\frac{ \kappa \ p }{\varepsilon^2}$, where $\kappa$ is a fixed constant. \label{step:unet}
  % \STATE Let $\mathcal{E}$ denote an $\varepsilon/4$-net of $[-1,1]$; in particular, $\mathcal{E} = \{0, \pm \frac{\varepsilon}{4},  \pm \ 2 \frac{\varepsilon}{4}, \pm \ 3 \frac{\varepsilon}{4}, \ldots, \pm 1\}$.
   \STATE Write $C^i$ to denote the $i$th column of matrix $C=A+B$, for $i \in [n]$.
   \FORALL{ multisets $S \in \mathcal{U}$ } \label{step:loop}
   \STATE Set  $u = \frac{1}{|S|} \sum_{i \in S} C^i$. \\
   \COMMENT{$u$ is an $|S|$-uniform vector in the convex hull of the columns of $C$.}
   %\STATE Solve linear program LP($u,\pi_2$) and convex program CP($u,\pi_1$).
    \STATE Solve convex program CP($u$).
   \IF{the objective function value of CP($u$) is \emph{less than} $\varepsilon/2$} \label{step:if}
   \STATE Return $(x,y)$, where $x$ and $y$ form an optimal solution of CP($u$).
   \ENDIF
   \ENDFOR
    \end{algorithmic}
\end{algorithm}

\begin{proof}[Proof of Theorem~\ref{thm:nash}]
%First, we observe that the time complexity of the algorithm is $ n^{ O \left( \frac{\log s}{\varepsilon^2} \right) }$. Note that (with constant $\kappa$) the number of multisets of $\{1,2,\ldots, n\}$ of cardinality at most $\frac{ \kappa \ p }{\varepsilon^2}$ is $n^{O\left( \frac{p}{\varepsilon^2} \right)}$. So for the collection $\mathcal{U}$---defined in Step~\ref{step:unet} of the algorithm---we have $| \mathcal{U} | = n^{O\left( \frac{p}{\varepsilon^2} \right)}$. Therefore, the algorithm iterates at most $n^{ O \left( \frac{p}{\varepsilon^2} \right) }$ times. In each iteration, solving the convex program takes polynomial time. Therefore, with $p=\log s$, the running time of the algorithm is at most $n^{O \left( \frac{\log s}{\varepsilon^2} \right) }$.

%Also, the size of the $\varepsilon/4$-net $\mathcal{E}$ is  $O(\frac{1}{\varepsilon^2})$. Overall, we get that the algorithm iterates at most $\left( \frac{n}{\varepsilon^2} \right)^{ O \left( \frac{p}{\varepsilon^2} \right) }$ times. In each iteration, solving the linear and convex program takes polynomial time. Therefore, with $p=\log s$, the running time of the algorithm is at most $\left( \frac{n}{\varepsilon^2} \right)^{ O \left( \frac{\log s}{\varepsilon^2} \right) }$. This establishes the desired running time guarantee for the algorithm.

Algorithm~\ref{alg:nash} iterates at most $n^{O\left( \frac{p}{\varepsilon^2}\right)}$ times, since this is an upper bound on the number of multisets of size $O(\frac{p}{\varepsilon^2})$. Furthermore, in each iteration the algorithm solves convex program CP($u$), this takes polynomial time. Given that $p=\log s$, these observations establish the desired running-time bound.

Now, in order to prove the theorem we need to show that Alogorithm~\ref{alg:nash} is (i) Sound: any mixed strategy pair, $(x,y)$, returned by the algorithm is an approximate Nash equilibrium; (ii) Complete: the algorithm always returns a mixed-strategy pair.

\emph{Soundness:} Lemma~\ref{lem:nash} implies that any mixed-strategy pair returned by the algorithm is guaranteed to be an $\varepsilon$-Nash equilibrium. Specifically, say for some $u$ the ``if'' condition in Step~\ref{step:if} is met. In addition, let $x$ and $y$ be the returned optimal solution of CP($u$). Then,
\begin{align}
| x^T ( Cy - u) | & \leq \| x \|_q \| Cy - u \|_p \qquad \textrm{(H\"{o}lder's inequality)} \\
& \leq 1 \times \varepsilon/2.
\end{align}
Here $q = p/(p-1) \geq 1$, since $p \geq 2$. The second inequality follows from the fact that the objective function value of CP($u$)  is no more than $\varepsilon/2$ and $\| x \|_q \leq \| x \|_1 = 1$. Since the returned $x$ satisfies the feasibility constraints in CP($u$) we have $x^T u \geq \pi_1 + \pi_2 - \varepsilon/2$.  Therefore, $x^T C y \geq x^T u - \varepsilon/2 \geq \pi_1 + \pi_2 - \varepsilon$. Overall, $x$ and $y$ satisfy the conditions in  Lemma~\ref{lem:nash}, and hence form an $\varepsilon$-Nash equilibrium.

\emph{Completeness:} It remains to show that the ``if'' condition in Step~\ref{step:if} is satisfied at least once (and hence the algorithm successfully returns a mixed strategy pair $(x,y)$). Next we accomplish this task.
%and show that at some point during its execution the algorithm successfully returns a mixed strategy pair $(x,y)$. Next we establish the successful termination.

Write $(\hat{x}, \hat{y})$ to denote a Nash equilibrium of the given game and let $\hat{\pi}_1$ ($\hat{\pi}_2$) be the payoff of the row (column) player under this equilibrium. Note that $C\hat{y}$ lies in the convex hull of the columns of $C$. Furthermore, since the sparsity of the game is $s$, for $p = \log s$, we have
\begin{align}
\| C^i \|_p & \leq 4 \qquad \forall i \in [n].
\end{align}

This follows from the fact that the entries of matrix $C$ lie between $-2$ and $2$ (recall that the payoffs are normalized between $-1$ and $1$); hence considering the $p$ norm of any column $i$ we get: $  \| C^i \|_p  \leq \left( 2^p s \right)^{1/p}  = 2 (s)^{1/p}  = 2 (2^{\log s})^{1/p}  = 4  $.

Therefore, for $p = \log s$, we can apply Theorem~\ref{thm:caratheodory-p} over the convex hull $\conv(\{C^i\}_i) $ with $\gamma \leq 4$. In particular, for $\mu = C \hat{y}$, Theorem~\ref{thm:caratheodory-p} implies that there exists a $O(\frac{p}{\varepsilon^2}) $ uniform vector $\mu'$ such that $\| C\hat{y} - \mu' \|_p \leq \varepsilon/2$

Since $\mu'$ is $O(\frac{p}{\varepsilon^2}) $ uniform, at some point during its execution the algorithm (with an appropriate value of $\kappa$) will set $u = \mu'$.  Therefore, at least once the algorithm will consider a $u$ that satisfies
\begin{align}
\| C\hat{y} - u \|_p & \leq \varepsilon/2. \label{ineq:goodu}
\end{align}

We show that in this case $\hat{x}$, $\hat{y}$, $\hat{\pi}_1$ and $\hat{\pi}_2 $ form a feasible solution of CP($u$) that achieves an objective function value of no more than $\varepsilon/2$. That is, the ``if'' condition in Step~\ref{step:if} is satisfied for this choice of $u$.

First of all the fact that the objective function value is no more than $\varepsilon/2$ follows directly from (\ref{ineq:goodu}).

Since $(\hat{x}, \hat{y})$ is a Nash equilibrium, using Theorem~\ref{thm:ms} we get
\begin{align}
\hat{x}^T C \hat{y} & = \hat{\pi}_1 + \hat{\pi}_2.
\end{align}

Next we show that $ \hat{x}^T u \geq \pi_1 + \pi_2 - \varepsilon/2$. Consider the following bound:
\begin{align}
| \hat{x}^T ( C\hat{y} - u ) | & \leq \| \hat{x} \|_q \| C\hat{y} - u \|_p \qquad \textrm{(H\"{o}lder's inequality)} \\
& \leq 1 \times \varepsilon/2.
\end{align}
Again, $q= p/(p-1) \geq 1 $ and we have $\| \hat{x} \|_q \leq 1$. Here, the second inequality now follows from our choice of $u$.  Since, $\hat{x}^T C \hat{y}  = \hat{\pi}_1 + \hat{\pi}_2$, we have $ \hat{x}^T u  \geq \hat{\pi}_1 + \hat{\pi}_2 - \varepsilon/2 $.

The remaining feasibility constraints of CP($u$) are satisfied as well. This simply follows from the fact that $\hat{x}, \hat{y}, \hat{\pi}_1$, and $\hat{\pi}_2$ form a feasible (in fact optimal) solution of (BP), see Theorem~\ref{thm:ms}.

Overall, we get that the ``if'' condition in Step~\ref{step:if} will be satisfied at least once and this completes the proof.
%\begin{align}
%\hat{x}^T B & \leq \mathbbm{1}^T \hat{\pi}_2 \\
%A \hat{y} & \leq \mathbbm{1} \hat{\pi}_1
%\end{align}

%In addition, Theorem~\ref{thm:ms} implies that $\hat{x}$ and $\hat{y}$ form an optimal solution of $(BP)$, i.e., they are feasible for (BP). Therefore, $\hat{x}$ and $\hat{y}$ {\emph directly} satisfy all the feasibility constraints of CP($u$) except $\hat{x}^T u \geq \hat{\pi}_1 + \hat{\pi}_2 - \varepsilon/2$. We show that this inequality is satisfied as well.

%Since $\mathcal{E}$ is an $\varepsilon/4$-net of $[-1,1]$ and $\hat{\pi}_1, \hat{\pi}_2 \in [-1,1]$ this case is guaranteed to happen. We show that in this case $\hat{x}$ is a feasible solution of LP($u,\pi_2$) and it achieves an objective function value of at least $\pi_1 + \pi_2 - \varepsilon $. In addition, $\hat{y}$ is feasible solutions of CP($u,\pi_1$) and it achieves an objective function value of no more than $\varepsilon/2$. Overall, this implies that the if condition in Step~\ref{step:if} of the algorithm is satisfied and the algorithm successfully returns a mixed strategy pair $(x,y)$.

%For the case at hand, $ \hat{\pi}_1 \leq \pi_1$ and $ \hat{\pi}_2 \leq \pi_2$, therefore $\hat{x}$ satisfies the constraints in LP($u,\pi_2$) and $\hat{y}$ satisfies the constraints in CP($u,\pi_1$). By our choice of $u$, $ \| C\hat{y} - u \|_p \leq \varepsilon/2 $. Hence, $\hat{y}$ achieves an objective function value of no more than $\varepsilon/2$ for CP($u,\pi_1$). This ensures that the optimal value achieved by CP($u,\pi_1$) is in fact no more than $\varepsilon/2$.
\end{proof}

\hfill \\

Ideas developed in this section can address other notions of sparsity as well. Specifically, if there exist $\alpha, \beta \in \mathbb{R}_+$ and $\gamma \in \mathbb{R}$ such that the matrix $\alpha A + \beta B + \gamma \mathbbm{1}_{n \times n}$ has column or row sparsity $s$, then our algorithm can be used to find an $\varepsilon$-Nash equilibrium of the game $(A,B)$ in time $n^{O\left(\frac{\lambda^2 \log s }{\varepsilon^2}\right)}$; here, $\mathbbm{1}_{n \times n}$ is the all-ones $n \times n$ matrix and $\lambda:= \max\{ \alpha, \beta, 1/\alpha, 1/\beta\}$. This follows from the fact that a $\left(\min\{ \alpha, \beta\} \ \varepsilon \right)$-Nash equilibrium of the game $\left(\alpha A, \beta B + \gamma \mathbbm{1}_{n \times n}\right)$ is an $\varepsilon$-Nash equilibrium of the game $(A,B)$. 

Furthermore, in time $n^{O\left(\frac{\log s}{\varepsilon^2}\right)}$, we can compute $\varepsilon$-Nash equilibria of games in which \emph{both} matrices $A$ and $B$ have column or row sparsity $s$. Note that this case is not a direct corollary of Theorem~\ref{thm:nash}. In particular,  if the columns of matrix $A$ and the rows of matrix $B$ are $s$ sparse, then it is not necessary that $A+B$ has low column or row sparsity. But, an approximate equilibrium of such a game can be computed by exhaustively searching for vectors, $v$ and $w$, that are $\varepsilon/4$-close (under the $\log s$-norm distance) to $A\hat{y}$ and $\hat{x}^TB$ respectively, here $(\hat{x}, \hat{y})$ is a Nash equilibrium of $(A,B)$.  In this case, instead of CP($u$), we need to solve a convex program that minimizes $\|Ay - v \|_{\log s} + \|B^T x - w \|_{\log s}$ and has the following constraint $x^Tv + w^T y \geq \pi_1 + \pi_2 - \varepsilon/2$ along with the constraints present in (BP). A direct extension of the proof of Theorem~\ref{thm:nash} shows that this program finds an $\varepsilon$-Nash equilibrium of the game $(A,B)$.

\begin{remark}
Consider the class of games in which the $p$ norm of the columns of matrix $C$ is a fixed constant. A simple modification of the arguments mentioned above shows that for such games an $\varepsilon$-Nash equilibrium can be computed in time $n^{O\left( \frac{p}{\varepsilon^2} \right)}$.
\end{remark}
%the class of games in which the $p$ norm of the columns of matrix $C$ is a fixed constant, for any norm $p$ Theorem~\ref{thm:nash} can be generalized to address such games.

\begin{remark}
Algorithm~\ref{alg:nash} can be adopted to find an approximate Nash equilibrium with large social welfare (the total payoffs of the players). Specifically, in order to determine whether there exists an approximate Nash equilibrium with social welfare more than $\alpha - \varepsilon$, we include the constraint $\pi_1 + \pi_2 \geq \alpha$ in CP($u$). The time complexity of the algorithm stays the same, and then via a binary search over $\alpha$ we can find an approximate Nash equilibrium with near-optimal social welfare.
\end{remark}

\begin{remark}
In Algorithm~\ref{alg:nash}, instead of the convex program CP($u$), we can solve the linear program with objective $\min \| Cy - u\|_\infty$ and constraints identical to CP($u$). Algorithm~\ref{alg:nash} still finds an approximate Nash, since $\| Cy - u \|_\infty \leq \|Cy - u  \|_p$ and we have $|x^T(Cy - u)| \leq \| x \|_1 \| Cy - u \|_\infty \leq 1 \times \varepsilon/2$ ($1$ and $\infty$ are H\"{o}lder conjugates of each other).

Solving a linear program, in place of a convex program,  would lead to a polynomial improvement in the running time of the algorithm. But, minimizing the $p$ norm of $Cy - u$ remains useful in specific cases; in particular, it provides a better running-time bound when the game is guaranteed to have a ``small probability'' equilibrium. We detail this result in the following Section.
%have an Nash equilibrium $(\hat{x}, \hat{y})$ with a small value for $\| x \|_q $.
\end{remark}

\subsection{Small Probability Games}
Daskalakis and Papadimitriou~\cite{smallprob} showed that there exists a PTAS for games that contain an equilibrium with small---specifically, $O\left(\frac{1}{n}\right)$---probability values. This result is somewhat surprisingly, since such small-probability equilibria have large---$\Omega(n)$---support, and hence are not amenable to, say, exhaustive search. This section shows that if a game has an equilibrium with probability values $O\left( \frac{1}{m} \right)$, for some $1 \leq m \leq n$, then an approximate equilibrium can be computed in time $n^{O(k/\varepsilon^2)}$, where $k$ has a logarithmic dependence on $s/m$. Since column sparsity $s$ is no more than $n$, we get back the result of~\cite{smallprob} as a special case.

\begin{definition}[Small Probability Equilibrium]
A Nash equilibrium $(x,y)$ is said to be $m$-small probability if all the entries of $x$ and $y$ are at most $\frac{1}{m}$.
\end{definition}

In~\cite{smallprob} a PTAS is given for games that have an equilibrium with probability values at most $\frac{1}{\delta n}$, for some fixed constant $\delta \in (0,1]$. Hence, the setting of~\cite{smallprob} corresponds to games that have $\delta n$-small probability equilibrium. Next we prove a result for general $m$-small probability equilibria.
%provides a randomized PTAS for games that contain $\delta n$-small probability equilibrium, with fixed constant $\delta \in (0,1]$.\footnote{The specific result stated in~\cite{smallprob} requires the entries of both $x$ and $y$ to be at most $\frac{1}{\delta n}$. But, their idea can be extended to handle equilibria where either the entries of $x$ or the entries of $y$ are at most $\frac{1}{\delta n}$.}

\begin{theorem}
Let  $A, B \in [-1,1]^{n \times n}$ be the payoff matrices of an $s$-sparse bimatrix game. If $(A,B)$ contains an $m$-small probability Nash equilibrium, then an $\varepsilon$-Nash equilibrium of the game can be computed in time
\begin{align*}
%\left( \frac{n}{\varepsilon^2} \right)^{ O \left( \frac{\log s}{\varepsilon^2} \right) }.
n^{ O \left( \frac{t}{\varepsilon^2} \right) },
\end{align*}
where $t = \max \left\{ 2 \log\left( \frac{s}{m} \right) , 2 \right\}$.
\end{theorem}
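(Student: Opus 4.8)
The plan is to re-run the template behind Algorithm~\ref{alg:nash} and Theorem~\ref{thm:nash}, exploiting the fact that \emph{both} coordinates of the promised equilibrium have small probability values, which allows us (i) to pick a smaller norm $p$ and (ii) to demand less accuracy when we approximate $C\hat y$. Let $(\hat x,\hat y)$ be the promised $m$-small-probability Nash equilibrium, with payoffs $\hat\pi_1,\hat\pi_2$, and set $p:=\max\{\log(s/m),2\}$ with H\"older conjugate $q=p/(p-1)$. The only arithmetic fact about small-probability distributions we need is that $\|\hat x\|_q\le m^{-1/p}$: indeed $\|\hat x\|_q^q=\sum_i \hat x_i^q\le (1/m)^{q-1}\sum_i\hat x_i=m^{-(q-1)}$, and $1-1/q=1/p$. (The corresponding bound for $\hat y$ will not be used.)

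First I would modify the convex program and the algorithm: in place of CP($u$), solve CP$'(u)$ obtained by adjoining the box constraint $x\le \frac{1}{m}\mathbbm{1}$ (componentwise), and change the acceptance test in Algorithm~\ref{alg:nash} to ``objective value less than $m^{1/p}\varepsilon/2$''. Soundness then mirrors Theorem~\ref{thm:nash}: if $(x,y,\pi_1,\pi_2)$ is returned, then $x$ is small-probability so $\|x\|_q\le m^{-1/p}$, whence by H\"older $|x^T(Cy-u)|\le \|x\|_q\|Cy-u\|_p< m^{-1/p}\cdot m^{1/p}\varepsilon/2=\varepsilon/2$; combined with the feasibility constraint $x^Tu\ge\pi_1+\pi_2-\varepsilon/2$ this gives $x^TCy>\pi_1+\pi_2-\varepsilon$. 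Since the returned tuple satisfies every constraint of (BP) (CP$'(u)$ only adds constraints), Lemma~\ref{lem:nash} applies verbatim and certifies that $(x,y)$ is an $\varepsilon$-Nash equilibrium.

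For completeness I would invoke Theorem~\ref{thm:caratheodory-p} on $\mu:=C\hat y\in\conv(\{C^i\}_i)$ with radius $\gamma:=\max_i\|C^i\|_p\le (2^p s)^{1/p}=2s^{1/p}$ (entries of $C$ lie in $[-2,2]$ and each column has at most $s$ nonzeros) and with the \emph{relaxed} accuracy parameter $m^{1/p}\varepsilon/2$. This yields an $N$-uniform vector $\mu'$ over the columns of $C$ with $\|C\hat y-\mu'\|_p\le m^{1/p}\varepsilon/2$, where $N=\frac{4p\gamma^2}{(m^{1/p}\varepsilon/2)^2}=\frac{64\,p\,(s/m)^{2/p}}{\varepsilon^2}$. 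By the choice of $p$ we have $(s/m)^{2/p}\le 4$ and $p\le t=\max\{2\log(s/m),2\}$, so $N=O(t/\varepsilon^2)$; hence (for a suitable constant $\kappa$) the algorithm, which enumerates all multisets of columns of size $O(t/\varepsilon^2)$, at some iteration sets $u=\mu'$. For that $u$ the tuple $(\hat x,\hat y,\hat\pi_1,\hat\pi_2)$ is feasible for CP$'(u)$: the box constraint holds because $\hat x$ is $m$-small-probability, the remaining (BP)-type constraints hold because $(\hat x,\hat y,\hat\pi_1,\hat\pi_2)$ is optimal for (BP) by Theorem~\ref{thm:ms}, and $\hat x^Tu\ge \hat x^TC\hat y-|\hat x^T(C\hat y-u)|\ge (\hat\pi_1+\hat\pi_2)-\|\hat x\|_q\|C\hat y-u\|_p\ge \hat\pi_1+\hat\pi_2-\varepsilon/2$, using $\hat x^TC\hat y=\hat\pi_1+\hat\pi_2$ from Theorem~\ref{thm:ms}; its objective value is $\|C\hat y-u\|_p\le m^{1/p}\varepsilon/2$, so the acceptance test fires and the algorithm outputs a mixed-strategy pair. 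Each of the $n^{O(t/\varepsilon^2)}$ iterations solves a convex program in polynomial time, so the total running time is $n^{O(t/\varepsilon^2)}$.

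The step I expect to be the crux is realizing that Algorithm~\ref{alg:nash} itself must change: the $m^{-1/p}$ H\"older saving is available only if the \emph{returned} $x$ (not merely the witness $\hat x$) is small-probability, which is exactly why the box constraint $x\le\frac1m\mathbbm{1}$ is added and the acceptance threshold is loosened to $m^{1/p}\varepsilon/2$ in tandem; everything else is parameter bookkeeping over the proof of Theorem~\ref{thm:nash}. This also explains the remark preceding this subsection: replacing the $p$-norm objective by an $\infty$-norm one would force $q=1$ and $\|x\|_1=1$, erasing the improvement.
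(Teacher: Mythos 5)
Your proposal is correct and follows essentially the same route as the paper: choose $p \approx \log(s/m)$, relax the Carath\'{e}odory accuracy to $m^{1/p}\varepsilon/2$, constrain the returned $x$ to be small-probability so that H\"older with $\|x\|_q \le m^{-1/p}$ recovers the $\varepsilon/2$ bound, and enumerate $O(t/\varepsilon^2)$-uniform vectors. The only (immaterial) differences are that the paper imposes the norm constraint $\|x\|_q \le m^{-1/p}$ directly rather than your componentwise box constraint $x \le \frac{1}{m}\mathbbm{1}$ (which implies it), and it takes $p = \max\{2\log(s/m),2\}$ rather than your $\max\{\log(s/m),2\}$, a constant-factor choice that does not affect the argument.
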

%and $\varepsilon >0$

\begin{proof}
Let norm $p= \max \left\{ 2 \log\left( \frac{s}{m} \right) , 2 \right\}$ and write $q$ to denote the H\"{o}lder conjugate of $p$, i.e., $q$ satisfies $\frac{1}{p} + \frac{1}{q} = 1$. To obtain this result Algorithm~\ref{alg:nash} is modified as follows: (i) use the updated value of $p$ instead of the one specified in Step~\ref{step:assign} of the algorithm; (ii) include convex constraint $\| x \|_q \leq {m^{-1/p}}$ in CP($u$); (iii) In Step~\ref{step:if} use $\frac{\varepsilon \ m^{1/p}}{2}$, instead of $\frac{\varepsilon}{2}$, as the threshold for returning a solution.

In order to establish the theorem we prove that the modified alogorithm is (i) Sound: any mixed strategy pair, $(x,y)$, returned by it is an approximate Nash equilibrium; (ii) Complete: the algorithm always returns a mixed-strategy strategy pair.

\emph{Soundness:} Below we show that if $x$ and $y$ are returned by this modified algorithm then they form a near-optimal solution of the bilinear program (BP) and, in particular, satisfy $x^T C y \geq \pi_1 + \pi_2 - \varepsilon $. Hence, Lemma~\ref{lem:nash} shows that any returned solution $(x,y)$ is an $\varepsilon$-Nash equilibrium.

Say the algorithm returns $(x,y)$ while considering vector $u$. Applying H\"{o}lder's inequality gives us:
\begin{align}
| x^T( Cy - u ) | & \leq \| x \|_q \| Cy - u \|_p \\
&  \leq m^{-1/p} \  \frac{\varepsilon \ m^{1/p}}{2} \\
& = \varepsilon/2.
\label{ineq:sound}
\end{align}

Here the second inequality uses the fact that $x$ satisfies the feasibility constraint $\|x \|_q \leq m^{-1/p}$ and the objective function value of CP($u$) is at most $\frac{\varepsilon \ m^{1/p}}{2}$. Since $x$ is a feasible solution of CP($u$), it satisfies $x^T u \geq \pi_1 + \pi_2 - \varepsilon/2$. Therefore, using inequality (\ref{ineq:sound}) we get $x^T C y \geq \pi_1 + \pi_2 - \varepsilon $, as required.

\emph{Completeness:} It remains to show that the ``if'' condition in Step~\ref{step:if} is satisfied at least once. To achieve this we prove that, for a particular $u$, an $m$-small probability Nash equilibrium forms a feasible solution of CP($u$) and achieves an objective function value of at most $\frac{\varepsilon \ m^{1/p}}{2}$. Therefore, for this $u$ the ``if'' condition in Step~\ref{step:if} will be met.

% (and hence the algorithm successfully returns a mixed strategy pair $(x,y)$).

%Let $\pi_1$ and $\pi_2$ be the payoffs of the row and column player respectively, under mixed-strategy pair $(x,y)$.

% Note that such small probability $\hat{x}$ is majorized by prob. distribution $u:=(1/m,1/m,....1/m,0,0,...,0,0)$. Therefore, $\hat{x}$ lies in the convex hull of all permutations of $u$. Say $u_1, u_2,...$ are permutations of $u$ such that $\hat{x} = \sum_i \alpha_i u_i$ convex combination. Since q-norm is a convex function and $\| u_i \|_q = \| u \|$ for all $i$, we have \| \hat{x} \|_q \leq  \sum_i \alpha_i \| u_i \|_q = \sum_i \alpha_i \| u \|_q = \| u \|_q$. But $\| u \|_q = \left( m \frac{1}{m^q} \right)^{1/q}$.

Recall that the columns of $C$ are $s$ sparse and its entries are at most $2$. Hence the $p$ norm of any column $C^i$ satisfies $\| C^i \|_p \leq (s 2^p)^{1/p} = 2 s^{1/p}$.

%with error term $\frac{\varepsilon \ m^{1/p}}{2}$ and
Let $(\hat{x}, \hat{y})$ be an $m$-small probability Nash equilibrium of the game. Theorem~\ref{thm:caratheodory-p}, applied over the convex hull $\conv(\{C^i\}_i) $ with $\gamma \leq 2 s^{1/p}$, guarantees that there exists a $O\left( \frac{ p \ s^{2/p} } {\varepsilon^2 \ m^{2/p}} \right)$ uniform vector that is $\frac{\varepsilon \ m^{1/p}}{2}$ close to $C\hat{y}$. Since $p \geq 2 \log \left( \frac{s}{m} \right)$, we have
\begin{align}
\frac{ s^{2/p} } { \ m^{2/p}} & = \left(\frac{s}{m}\right)^{2/p} \\
& \leq  2.
\end{align}

Therefore, there exists a $O\left(\frac{p}{\varepsilon^2}\right)$ uniform vector that is $\frac{\varepsilon \ m^{1/p}}{2}$ close to $C\hat{y}$. Such a vector, say $u$, will be selected by the algorithm in Step~\ref{step:loop} at some point of time. Below we show that, for $u$, the mixed strategies of the Nash equilibrium $\hat{x}$ and $\hat{y}$ are feasible solutions that achieve the desired objective function value.

To establish the feasibility of $\hat{x}$, we first upper bound its $q$ norm. The fact that its entries at most $1/m$ and $q = \frac{p}{p-1} \geq 1$ implies:

\begin{align}
\| \hat{x} \|_q & \leq \left( m \frac{1}{m^q} \right)^{1/q} \\
& = m^{-1/p}.
\end{align}

Since $(\hat{x},\hat{y})$ is a Nash equilibrium, there exists payoffs $\hat{\pi}_1$ and $\hat{\pi}_2$ such $\hat{x}^T C \hat{y} = \hat{\pi}_1 + \hat{\pi}_2$ (Theorem~\ref{thm:ms}). Also, $\hat{x}$, $\hat{y}$, $\hat{\pi}_1$, and $\hat{\pi}_2$ are feasible with respect to (BP). Hence, the only constraint of CP($u$) that we still need to verify is $\hat{x}^T u \geq \hat{\pi}_1 + \hat{\pi}_2 - \varepsilon/2$. This follows from H\"{o}lder's inequality:

\begin{align}
|\hat{x}^T(C\hat{y} - u ) | & \leq \| \hat{x} \|_q \ \| C\hat{y} - u \|_p \\
& \leq m^{-1/p} \ \frac{\varepsilon \ m^{1/p}}{2} \\
& = \varepsilon/2.
\end{align}

Overall, for the above specified $u$, the ``if'' condition in Step~\ref{step:if} will be satisfied. This shows that the algorithms successfully returns an $\varepsilon$-Nash equilibrium.

The algorithm iterates at most $n^{O\left( \frac{p}{\varepsilon^2}\right)}$ times, since this is an upper bound on the number of multisets of size $O(\frac{p}{\varepsilon^2})$. Furthermore, in each iteration the algorithm solves a convex program, this takes polynomial time. These observations establish the desired running-time bound and complete the proof.
\end{proof}

\section{Densest Subgraph}
\label{sect:dense}

This section and the next one present additive approximations for the normalized densest $k$-subgraph problem (\rm{NDkS}) and the densest $k$-bipartite subgraph problem (\rm{DkBS}) respectively.  

In \rm{NDkS} we are given a simple graph $G=(V,E)$ along with a size parameter $k \leq |V|$ and the goal is to find a maximum density subgraph containing exactly $k$ vertices. Here, density of a size-$k$ subgraph $S=(V_S, E_S)$ is defined to be $\rho(S):= |E_S|/k^2$. Note that in \rm{NDkS} density is normalized to be at most one. 

Next, we present a quadratic program, an $\varepsilon$-additive approximate solution of which can be used to efficiently find an $\varepsilon$-additive approximate solution of \rm{NDkS}. Write $A$ to denote the adjacency matrix of the given graph $G$ and let $n$ be the number of vertices in $G$. Define matrix $C : = \frac{1}{2} A + I$, here $I$ is the $n \times n$ identity matrix. 

\begin{align}
\max_{x} & \ \ \ x^T C x    \tag{\bf QP} \\
\textrm{subject to } \ \ \ & x_i \leq \frac{1}{k} \qquad \forall i \in [n]  \nonumber \\
& x \in \Delta^n \nonumber
\end{align}

Write $S^*$ to denote an optimal solution of the given \rm{NDkS} instance and let $z^*$ denote the optimal value of the quadratic program (QP). Given a solution $x$ of (QP) that achieves an objective function value of $z^* - \varepsilon$ (i.e., $x$ satisfies $x^T C x \geq z^* - \varepsilon$), we show how to efficiently find a subgraph $S$ that is an $\varepsilon$-additive approximate solution of \rm{NDkS}, i.e., find a size-$k$ subgraph $S$ that satisfies $\rho(S) \geq \rho(S^*) - \varepsilon$. Towards this end, the following lemma serves as a useful tool. 

\begin{lemma}
\label{lemma:unif}
Given a feasible solution, $y$, of the quadratic program (QP) we can find in polynomial time a feasible solution $z$ that satisfies  $z^T C z \geq y^T C y$ and, moreover, every  component of $z$ is either $0$ or $\frac{1}{k}$, i.e., $z_i \in \left\{0, \frac{1}{k} \right\}$ for each $i \in [n]$.
\end{lemma}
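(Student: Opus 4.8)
The plan is to argue that the quadratic form $x^T C x$ is multilinear (affine in each coordinate, since $C$ has zero diagonal contribution beyond the identity—actually $C_{ii} = 1$, so let me be careful here) and then apply a coordinate-by-coordinate rounding argument. First I would observe that the feasible region of (QP) is a polytope $P = \{ x \in \Delta^n : x_i \leq 1/k \ \forall i\}$, and that its vertices are exactly the points of the form $\frac{1}{k}\mathbbm{1}_S$ for subsets $S \subseteq [n]$ with $|S| = k$; equivalently, the $0/\frac{1}{k}$-valued feasible points. So the conclusion of the lemma is that we can move from any feasible $y$ to a vertex $z$ without decreasing the objective, constructively and in polynomial time.

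The key technical point is that $x^T C x$, while quadratic, is \emph{affine in each coordinate $x_j$ when the other coordinates are held fixed}. Write $x^T C x = \sum_{i,j} C_{ij} x_i x_j$; fixing all coordinates except $x_j$, the dependence on $x_j$ is $C_{jj} x_j^2 + 2 x_j \sum_{i \neq j} C_{ij} x_i = x_j^2 + 2 x_j (\cdot)$ — this is quadratic, not affine, because $C_{jj} = 1 \neq 0$. So a pure coordinate-wise linearity argument does not immediately work. The fix: the coefficient of $x_j^2$ is $+1 > 0$, so along any line segment in $x_j$ the function is \emph{convex}, hence maximized at an endpoint. Therefore, holding all other coordinates fixed and pushing $x_j$ to the boundary of its feasible interval (which, given $\sum_i x_i = 1$ and $0 \le x_i \le 1/k$, forces a compensating adjustment — so one should instead work with a pair of coordinates $x_j, x_\ell$ whose sum is held fixed, moving mass from one to the other). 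Along that segment the objective is again a convex quadratic in the transferred mass, so it is maximized at an endpoint, i.e.\ where one of $x_j, x_\ell$ hits $0$ or $1/k$. Iterating this across pairs of coordinates, each step fixes one more coordinate at $0$ or $1/k$ without decreasing $x^T C x$, and terminates after $O(n)$ steps with all coordinates in $\{0, 1/k\}$.

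The steps in order: (1) show the vertices of $P$ are the $0/\frac{1}{k}$ points and that feasibility forces exactly $k$ of them to equal $1/k$; (2) given feasible $y$, if some coordinate $y_j \notin \{0,1/k\}$, pick another coordinate $y_\ell \notin \{0,1/k\}$ (one must exist by the mass/count constraint) and consider the one-parameter family obtained by transferring mass between positions $j$ and $\ell$; (3) observe the objective restricted to this family is a convex quadratic (the second derivative in the transfer parameter $t$ is $2(C_{jj} - 2C_{j\ell} + C_{\ell\ell}) = 2(1 - 2C_{j\ell} + 1) \geq 0$ since $C_{j\ell} \leq \tfrac12$), so it attains its max over the feasible interval at an endpoint, where a new coordinate becomes $0$ or $1/k$; (4) replace $y$ by that endpoint and repeat; (5) conclude after finitely many ($\le n$) rounds, noting each round is a polynomial-time computation. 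The main obstacle is handling the simplex constraint correctly — one cannot move a single coordinate in isolation — so the argument must be phrased as mass-transfer between a \emph{pair} of fractional coordinates, and one must verify the relevant quadratic coefficient is nonnegative, which is exactly where $C = \tfrac12 A + I$ and the simple-graph assumption ($A_{j\ell} \in \{0,1\}$, so $C_{j\ell} \le \tfrac12$) enter.
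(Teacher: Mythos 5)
Your proof is correct and follows essentially the same route as the paper: both arguments iteratively transfer mass between a pair of fractional coordinates of $y$ until one of them hits $0$ or $\frac{1}{k}$, using the structure of $C=\tfrac12 A+I$ (and the observation that the number of fractional coordinates is never exactly one) to guarantee the objective does not decrease. The only difference is in how the non-decrease is justified --- the paper explicitly chooses the direction of transfer by comparing the quantities $\gamma_i+y_i$ (resp.\ $\gamma_i$) and checking the sign of the linear term, whereas you note that the objective restricted to the transfer segment is a convex quadratic (second derivative $2(2-2C_{j\ell})\geq 2>0$) and move to the better endpoint; this is a mild streamlining of the same exchange argument.
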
 
\begin{proof}
Given feasible solution $y$ write $M(y) : = \{ i \in [n] \mid 0 < y_i < 1/k \}$. We iteratively update $y$ to decrease the cardinality of $M(y)$, and at the same time ensure that the objective function value (i.e., $y^T C y$) does not decrease. Since $|M(y)| \leq n$ we iterate at most $n$ times. This will overall establish the lemma. 

Note that each $i \in [n]$ indices both a component of $y$ and a vertex of the graph $G$. Write $\gamma_i$ to denote the total $y$ value of vertex $i$ and its neighbors in $G$, i.e., $\gamma_i := y_i + \sum_{j : (i,j) \in E } y_j$. Since $y \in \Delta^n$ and integer $k >1$, the cardinality of set $M(y)$ is either $0$ or strictly greater than one.  If $M(y) = \phi$ then the stated claims follows simply by setting $z=y$. Otherwise, if $|M(y)| \geq 2$ then we select two vertices $i ,j \in M(y)$ and change $y_i$ and $y_j$ such that the size of $M(y)$ decreases by at least one. We select $i, j \in M(y)$ as follows:

\begin{itemize}
\item If there exist vertices $i, j \in M(y)$ that are not connected by an edge in $G$, then without loss of generality we assume that $\gamma_i + y_i \geq \gamma_j + y_j $. 
\item If every pair of vertices $i, j \in M(y)$ is connected by an edge, then without loss of generality we assume that  $\gamma_i \geq \gamma_j$. 
\end{itemize}
 
Let $\delta : = \min \left\{ y_j , \frac{1}{k} - y_i \right\}$. We update $y_i \leftarrow y_i + \delta$ and $y_j \leftarrow y_j - \delta$, this update ensures that either $y_j$ goes down to zero or $y_i $ becomes equal to $1/k$; hence, the size of $M(y)$ decreases by at least one. 

Next we show this change in $y$ does not decrease the objective function value $y^T C y$. Thereby we get the stated claim. 

Consider the case in which $i$ and $j$ are not connected via an edge. The other case in which $(i,j) \in E$ (and we have $s_i \geq s_j$) follows along the same lines.   

Before the update the following inequality holds, $\gamma_i + y_i \geq \gamma_j + y_j $. For any $\delta$, the change in objective function value is equal to $(y_i + \delta) ( \gamma_i + \delta) + (y_j - \delta) (\gamma_j - \delta) - ( y_i \gamma_i  + y_j \gamma_j) $. This quality is equal to $\delta( y_i + \gamma_i - y_j - \gamma_j) + \delta^2$.   
Since $y_i + \gamma_i - y_j - \gamma_j \geq 0$ and $\delta$ is nonnegative, we get that the update in $y$ does not decrease the objective function value. This completes the proof. 
\end{proof}

Recall that $\rho(S^*)$ and $z^*$ denote the optimal values of the given \rm{NDkS} instance and (QP) respectively. Using Lemma~\ref{lemma:unif} we get the following proposition.
\begin{proposition}
\label{prop:qp}
The optimal value of (QP) is equal to the optimal value of the NDkS instance plus $1/k$, i.e., $z^* = \rho(S^*) + 1/k$.
\end{proposition}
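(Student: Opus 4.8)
The plan is to establish the equality $z^* = \rho(S^*) + 1/k$ by proving two inequalities. First I would show $z^* \geq \rho(S^*) + 1/k$ by exhibiting a feasible point of (QP) that achieves this value: take the optimal subgraph $S^*$ with vertex set $V_{S^*}$ of size exactly $k$, and set $x_i = 1/k$ for $i \in V_{S^*}$ and $x_i = 0$ otherwise. This $x$ lies in $\Delta^n$ and satisfies $x_i \leq 1/k$, so it is feasible. Then $x^T C x = \tfrac{1}{2} x^T A x + x^T x$. The term $x^T x = k \cdot (1/k)^2 = 1/k$, and $\tfrac{1}{2} x^T A x = \tfrac{1}{2} \sum_{(i,j)} A_{ij} x_i x_j$; since $A$ is symmetric with a $1$ in positions $(i,j)$ and $(j,i)$ for each edge, this sum counts each edge inside $S^*$ twice, each contributing $(1/k)^2$, so $\tfrac{1}{2} x^T A x = |E_{S^*}|/k^2 = \rho(S^*)$. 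Hence $x^T C x = \rho(S^*) + 1/k$, giving the lower bound on $z^*$.

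For the reverse inequality $z^* \leq \rho(S^*) + 1/k$, I would start from an optimal solution $y$ of (QP) and invoke Lemma~\ref{lemma:unif} to obtain a feasible $z$ with $z^T C z \geq y^T C y = z^*$ and every component of $z$ in $\{0, 1/k\}$. Because $z \in \Delta^n$, exactly $k$ of its components equal $1/k$; let $T$ be that set of $k$ vertices, and let $S = (T, E_T)$ be the induced subgraph. Running the same computation as above in reverse, $z^T C z = \rho(S) + 1/k$, where $\rho(S) = |E_T|/k^2$. Since $S$ is a size-$k$ subgraph, $\rho(S) \leq \rho(S^*)$ by optimality of $S^*$ for the NDkS instance. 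Therefore $z^* \leq z^T C z = \rho(S) + 1/k \leq \rho(S^*) + 1/k$. Combining the two bounds yields the claimed equality.

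The main obstacle, such as it is, is purely bookkeeping: correctly accounting for the factor of $2$ that arises because $x^T A x$ double-counts each edge (matching the $\tfrac{1}{2}$ in the definition $C = \tfrac{1}{2} A + I$) and confirming that the diagonal contribution $x^T I x = \|x\|_2^2$ evaluates to exactly $1/k$ on any $\{0,1/k\}$-valued point of the simplex with $k$ nonzero entries. There is no real analytic difficulty here; the content is entirely in Lemma~\ref{lemma:unif}, which has already been proved and which lets us pass from an arbitrary optimal solution of (QP) to an integral one corresponding to an actual subgraph. I would also remark in passing that this proposition is what makes an $\varepsilon$-additive approximation of (QP) translate into an $\varepsilon$-additive approximation of NDkS, since the additive offset $1/k$ is identical for the optimum and for any near-optimal integral point extracted via Lemma~\ref{lemma:unif}.
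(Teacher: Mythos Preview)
Your proposal is correct and follows essentially the same approach as the paper's proof: exhibit the indicator-type vector of $S^*$ to get $z^* \geq \rho(S^*) + 1/k$, then apply Lemma~\ref{lemma:unif} to an optimal solution of (QP) to extract an induced size-$k$ subgraph $S'$ with $z^* = \rho(S') + 1/k \leq \rho(S^*) + 1/k$. You provide a bit more detail on the bookkeeping for $x^T C x$ than the paper does, but the argument is otherwise identical.
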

\begin{proof}
We use $S^*$ to obtain a feasible solution, $\hat{x}$, for (QP) as follows: for each $i \in V(S^*)$ set $\hat{x}_i = 1/k$ and for every $ j \notin V(S^*)$ set $\hat{x}_j =0$. Here, $V(S^*)$ denotes the set of vertices of the subgraph $S^*$. 

By the definition of matrix $C$ we get that $\hat{x}^T C \hat{x} = \rho(S^*) + 1/k$. This implies that 
\begin{align}
z^* \geq \rho(S^*) + 1/k. \label{ineq:one}
\end{align} 

Using Lemma~\ref{lemma:unif} we can obtain an optimal solution $x'$ of the QP such that the components of $x'$ are either $0$ or $1/k$. By definition, $(x')^T C x'= z^*$. Write $S'$ to denote the subgraph induced by the vertex subset $\Supp(x')$. Note that $z^* =  (x')^T C x' = \rho(S') + 1/k$. Given that $S^*$ is an optimal solution of the \rm{NDkS} instance, we have $ \rho(S^*) \geq \rho(S')$. Hence, the following inequality holds 
\begin{align}
\rho(S^*) + 1/k \geq z^*. \label{ineq:two}
\end{align}

Inequalities (\ref{ineq:one}) and (\ref{ineq:two}) imply the stated claim.  

\end{proof}

Finally, we establish the connection between (QP) and \rm{NDkS}.
\begin{theorem}
\label{thm:qp}
Given an $\varepsilon$-additive approximate solution of (QP) we can find an $\varepsilon$-additive approximate solution of \rm{NDkS} in polynomial time. 
\end{theorem}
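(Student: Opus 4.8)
The plan is to take an $\varepsilon$-additive approximate solution $x$ of (QP), meaning $x^T C x \geq z^* - \varepsilon$, and first apply Lemma~\ref{lemma:unif} to convert it into a feasible solution $z$ with $z_i \in \{0, 1/k\}$ for all $i$ and $z^T C z \geq x^T C x \geq z^* - \varepsilon$. Since $z \in \Delta^n$ and every nonzero component equals $1/k$, the support of $z$ has exactly $k$ vertices, so $z$ naturally corresponds to a size-$k$ subgraph $S$ induced by $\Supp(z)$. The whole reduction is polynomial-time because Lemma~\ref{lemma:unif} runs in polynomial time and reading off the support is trivial.

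The key computation is to relate $z^T C z$ to $\rho(S)$. By definition $C = \frac{1}{2}A + I$, so $z^T C z = \frac{1}{2} z^T A z + z^T z$. With $z$ supported on the $k$ vertices of $S$ and equal to $1/k$ there, $z^T z = k \cdot (1/k)^2 = 1/k$, and $\frac{1}{2} z^T A z = \frac{1}{2} \cdot \frac{1}{k^2} \sum_{(i,j)\in E,\ i,j \in V(S)} 2 = |E_S|/k^2 = \rho(S)$ — this is exactly the identity already used inside the proof of Proposition~\ref{prop:qp}. Hence $z^T C z = \rho(S) + 1/k$.

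Combining everything: $\rho(S) + 1/k = z^T C z \geq z^* - \varepsilon$, and by Proposition~\ref{prop:qp} we have $z^* = \rho(S^*) + 1/k$, so $\rho(S) + 1/k \geq \rho(S^*) + 1/k - \varepsilon$, which gives $\rho(S) \geq \rho(S^*) - \varepsilon$. Thus $S$ is an $\varepsilon$-additive approximate solution of \rm{NDkS}, completing the proof.

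I do not expect a serious obstacle here — the statement is essentially a bookkeeping corollary of Lemma~\ref{lemma:unif} and Proposition~\ref{prop:qp}. The only point requiring a bit of care is making explicit that a $\{0,1/k\}$-valued vector in $\Delta^n$ has support of size exactly $k$ (so it genuinely yields a \emph{size-$k$} subgraph, as \rm{NDkS} demands), and reusing the $z^T C z = \rho(\Supp(z)) + 1/k$ identity cleanly rather than re-deriving it. No new inequality or probabilistic argument is needed.
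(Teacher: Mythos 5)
Your proposal is correct and follows the paper's own proof essentially verbatim: apply Lemma~\ref{lemma:unif} to round the approximate solution to a $\{0,1/k\}$-valued vector, read off the size-$k$ subgraph from its support, use the identity $z^T C z = \rho(S) + 1/k$, and conclude via Proposition~\ref{prop:qp}. The extra care you take in expanding $C = \frac{1}{2}A + I$ and checking that the support has exactly $k$ vertices is a welcome bit of explicitness but not a departure from the paper's argument.
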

\begin{proof}
Given an $\varepsilon$-additive approximate solution of (QP), via Lemma~\ref{lemma:unif}, we can find an $\varepsilon$-additive approximation $\hat{x}$ whose components are either $0$ or $1/k$. Write $\hat{S}$ to denote the subgraph induced by vertex subset $\Supp(\hat{x})$; recall that, each  $i \in [n]$ indices both a component of $\hat{x}$ and a vertex of the graph $G$. 

Note that $ \hat{x}^T C \hat{x} = \rho(\hat{S}) + 1/k$. Since $\hat{x}^T C \hat{x} \geq z^* - \varepsilon$, the following inequality holds $\rho(\hat{S}) + 1/k \geq z^* - \varepsilon$. Using Propostion~\ref{prop:qp} we get that $\hat{S}$ is an $\varepsilon$-additive approximate solution of \rm{NDkS}, $\rho(\hat{S}) \geq \rho(S^*) - \varepsilon$. 
\end{proof}

In other words, in order to determine an approximate solution of \rm{NDkS} it suffices to compute an approximate solution of (QP). 

%To complete the argument, next we show that (QP) can be efficiently approximated when the column sparsity of the adjacency matrix (i.e., the degree of the graph) is low.

% and it is shown that using an $\varepsilon$-approximate solution of this bilinear program we can efficiently determine an (additive) $\varepsilon$-approximate solution of \rm{DkBS}; i.e., we can determine subsets $S'$ and $T'$ that satisfy:
%\begin{align}
%\frac{|E(S',T')|}{|S'| |T'|} \geq \rho_k(G) - \varepsilon
%\end{align}
%and $|S'|, |T'| \leq k$.

%Our contribution here is to establish that  This complements the result of~\cite{alonvempala} that gives an efficient approximation for the case in which the rank of the matrix in the bilinear objective is low.

Note that, if the maximum degree of the graph $G$ is $d$ then the number of non-zero components in any column of the objective matrix $C$ is no more than $d+1$. This implies that, for $i \in [n]$, we have $\| C^i \|_p \leq (d +1)^{1/p}$. Here, $C^i$ denotes the $i$th column of $C$. Now for $p = \log (d +1) $ the following bound holds for all $i \in [n]$: $\| C^i \|_p \leq 2$. Therefore, as in Algorithm~\ref{alg:nash}, enumerating all $O\left( \frac{p}{\varepsilon^2}\right)$-uniform vectors in the convex hull of the columns of $C$, we can find an $\varepsilon$-approximate solution of (QP). This establishes the following theorem.

\begin{theorem}
Let $G$ be a graph with $n$ vertices and maximum degree $d$. Then, an $\varepsilon$-additive approximation of \rm{NDkS} over $G$ can be determined in time 
\begin{align*}
n^{O\left(\frac{\log d}{\varepsilon^2}\right)}.
\end{align*}
\end{theorem}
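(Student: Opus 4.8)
The plan is to assemble the final theorem directly from the pieces already developed in this section, in complete analogy with the proof of Theorem~\ref{thm:nash}. The target is to produce, in time $n^{O(\log d/\varepsilon^2)}$, an $\varepsilon$-additive approximate solution of (QP); Theorem~\ref{thm:qp} then converts this into an $\varepsilon$-additive approximate solution of \rm{NDkS} with only a polynomial overhead, so it suffices to approximate (QP). The key structural observations are already in place: the objective matrix $C = \frac12 A + I$ has column sparsity at most $d+1$ (each column has a diagonal $1$ plus at most $d$ off-diagonal $\frac12$'s), and its entries lie in $[0,1]$, so for the choice $p = \max\{\log(d+1),2\}$ we get $\|C^i\|_p \le (d+1)^{1/p} \le 2$ for every $i \in [n]$. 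Hence, modulo the constant scaling factor $\gamma \le 2$, the columns of $C$ lie in the $p$-unit ball and Theorem~\ref{thm:caratheodory-p} applies.

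The algorithm mirrors Algorithm~\ref{alg:nash}: enumerate all multisets $S$ of $[n]$ of cardinality at most $\kappa p/\varepsilon^2$ for a suitable constant $\kappa$, form the $|S|$-uniform vector $u = \frac{1}{|S|}\sum_{i\in S} C^i$ in $\conv(\{C^i\}_i)$, and for each such $u$ solve the convex feasibility/optimization problem of minimizing $\|Cx - u\|_p$ over $x \in \Delta^n$ with $x_i \le 1/k$ and the linear constraint $x^T u \ge z^* - \varepsilon/2$ (here one either knows or binary-searches for the value $z^*$, or equivalently one maximizes $x^T u$ subject to the norm bound being at most $\varepsilon/2$ and reports the best). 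The number of iterations is $n^{O(p/\varepsilon^2)} = n^{O(\log d/\varepsilon^2)}$ since the number of multisets of size $O(p/\varepsilon^2)$ drawn from $[n]$ is $n^{O(p/\varepsilon^2)}$, and each convex program is solved in polynomial time; this gives the claimed running time.

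For correctness one argues exactly as in the soundness/completeness split of Theorem~\ref{thm:nash}. Completeness: let $x^*$ be an optimal solution of (QP) with $(x^*)^T C x^* = z^*$. Then $Cx^* \in \conv(\{C^i\}_i)$, so by Theorem~\ref{thm:caratheodory-p} (with $\gamma \le 2$ and norm $p$) there is an $O(p/\varepsilon^2)$-uniform vector $u$ with $\|Cx^* - u\|_p \le \varepsilon/2$; this $u$ is enumerated by the algorithm, and by Hölder's inequality $|(x^*)^T(Cx^* - u)| \le \|x^*\|_q \|Cx^* - u\|_p \le 1 \cdot \varepsilon/2$, so $x^*$ is feasible for the program at this $u$ and witnesses objective value $\le \varepsilon/2$. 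Soundness: if the algorithm returns $x$ at some $u$ with $\|Cx - u\|_p \le \varepsilon/2$ and $x^T u \ge z^* - \varepsilon/2$, then again by Hölder $x^T C x \ge x^T u - \varepsilon/2 \ge z^* - \varepsilon$, so $x$ is an $\varepsilon$-additive approximate solution of (QP). Feeding this $x$ into Theorem~\ref{thm:qp} yields the desired $\varepsilon$-additive approximate solution of \rm{NDkS}, completing the proof.

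I do not expect a genuine obstacle here — the theorem is essentially a corollary of the Nash-equilibrium machinery applied to (QP) — so the only real care needed is bookkeeping: confirming the column-sparsity count of $C = \frac12 A + I$ is $d+1$ (not $d$), handling the corner case $d+1 < 4$ by taking $p = \max\{\log(d+1), 2\}$ so that $q = p/(p-1)$ is well-defined and the Khintchine bound applies, and noting that the extra $x_i \le 1/k$ constraint is convex and preserved under the whole argument so it does not interfere with either Hölder step or with Theorem~\ref{thm:caratheodory-p} (which only needs $Cx^*$ to lie in the convex hull of the columns, a fact independent of which feasible $x^*$ we picked). The mild subtlety of not knowing $z^*$ in advance is dispatched by the standard trick of instead maximizing $x^T u$ and using Proposition~\ref{prop:qp}/Theorem~\ref{thm:qp} to read off the approximation guarantee, or by a binary search over the threshold, neither of which changes the running time.
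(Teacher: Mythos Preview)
Your proposal is correct and follows essentially the same approach as the paper. The paper's proof of this theorem is extremely terse---it simply notes that the columns of $C=\tfrac12 A+I$ have at most $d+1$ nonzero entries, so $\|C^i\|_p\le 2$ for $p=\log(d+1)$, and then says ``as in Algorithm~\ref{alg:nash}, enumerating all $O(p/\varepsilon^2)$-uniform vectors \ldots\ we can find an $\varepsilon$-approximate solution of (QP)''; you have spelled out the soundness/completeness argument and the handling of the unknown $z^*$ that the paper leaves implicit.
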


In general, this gives us an additive approximation algorithm for \rm{NDkS} that runs in quasi-polynomial time.

\subsection{Densest Bipartite Suubgraph}
In \rm{DkBS} we are given a graph $G=(V,E)$  along with a size parameter $k \leq |V|$ and the goal is to find size-$k$ vertex subsets, $S$ and $T$, such that the density of edges between $S$ and $T$ is maximized. Specifically, the (bipartite) density of vertex subsets $S$ and $T$ is defined as follows:
\begin{align}
\rho(S,T) & := \frac{|E(S,T)|}{|S| |T|},
\end{align}
here $E(S,T)$ denotes the set of edges that connect $S$ and $T$.

Let vertex subsets $S^*$ and $T^*$ form an optimal solution of the given \rm{DkBS} instance. Hence, $\rho(S^*, T^*)$ is equal to the optimal density. Next we state a bilinear program from~\cite{alonvempala} to approximate \rm{DkBS}. %This bilinear program is from~\cite{alonvempala}.
Here, $A$ denotes the adjacency matrix of the given graph $G$ and $n = |V|$.

%use $\rho^*_k(G)$ to denote  i.e.,  $\rho^*_k(G) := \max_{ S, T \subset V: |S|,|T| = k } \ \  \rho (S,T)$. Note that the subsets $S$ and $T$ are not required to be disjoint.
% Our objective is to determine an additive approximation of the
%\begin{align}
%\rho_k(G) & := \max_{ S, T \subset V: |S|,|T| = k } \frac{|E(S,T)|}{|S| |T|},
%\end{align}

%\rm{DBS} is a natural variant of the standard densest subgraph problem, wherein the goal is to determine the densest subgraph of a specified size. As noted in~\cite{alonvempala}, the answer to the densest subgraph problem with size parameter equal to $2k$ can differ from the answer of \rm{DBS} by at most a factor of two. The best-known approximation ratio for the densest subgraph problem is $n^{(1/4+ o(1))}$~\cite{bhaskara2010detecting} and the result of~\cite{alon2011inapproximability} implies that a constant-factor approximation for \rm{DBS} in general is unlikely.

\begin{align}
\max_{x,y} & \ \ \ x^TAy  \nonumber \\
\textrm{subject to } \ \ \ & x, y \in \Delta^n \nonumber \\
& x_i, y_i \leq \frac{1}{k} \qquad \forall i \in [n] \tag{BP-DkBS}.
\end{align}

Note that optimizing (BP-DkBS) over $x$, with a fixed $y$, corresponds to solving a linear program. Therefore, for any fixed $y$ there exists an optimal \emph{basic} feasible $x$, and vice versa. In other words, for any feasible pair $(x_0,y_0)$ we can find  $(x, y)$, such that $x_0^T A y_0 \leq x^TAy$ and the all the components of $x$ and $y$ are either $0$ or $1/k$. This observation implies that the optimal value of (BP-DkBS) is equal to $\rho(S^*, T^*)$. In addition, given an additive $\varepsilon$-approximate solution of (BP-DkBS), $(x',y')$, we can efficiently determine an $\varepsilon$-approximate solution of \rm{DkBS}. Specifically, we can assume without loss of generality that $x'$ and $y'$ are basic, and then for $S':=\Supp(x')$ and $T':=\Supp(y')$ we have $\rho(S',T') \geq \rho(S^*, T^*) - \varepsilon$. In other words, in order to determine an approximate solution of \rm{DkBS} it suffices to compute an approximate solution of (BP-DkBS). 

Note that the column sparsity of the objective matrix in (BP-DkBS)---i.e, the column sparsity of the adjacency matrix $A$---is equal to the maximum degree of the given graph. Therefore, as outlined in the previous section, we can modify Algorithm~\ref{alg:nash} to obtain the following theorem.

%To complete the argument, next we show that (BP-DkBS) can be efficiently approximated when the column sparsity of the adjacency matrix (i.e., the degree of the graph) is low.
% and it is shown that using an $\varepsilon$-approximate solution of this bilinear program we can efficiently determine an (additive) $\varepsilon$-approximate solution of \rm{DkBS}; i.e., we can determine subsets $S'$ and $T'$ that satisfy:
%\begin{align}
%\frac{|E(S',T')|}{|S'| |T'|} \geq \rho_k(G) - \varepsilon
%\end{align}
%and $|S'|, |T'| \leq k$.

%Our contribution here is to establish that  This complements the result of~\cite{alonvempala} that gives an efficient approximation for the case in which the rank of the matrix in the bilinear objective is low.

%If the maximum degree of the graph $G$ is $d$ then the number of non-zero components in any column of $A$ is no more than $d$. This implies that, for $i \in [n]$, we have $\| A^i \|_p \leq d^{1/p}$. Here, $A^i$ denotes the $i$th column of $A$.

%Now for $p = \log d$ the following bound holds for all $i \in [n]$: $\| A^i \|_p \leq 2$. Therefore, along the lines of Algorithm~\ref{alg:nash} we can enumerate over all $O\left( \frac{p}{\varepsilon^2}\right)$-uniform vectors in the convex hull of the columns of $A$ and find an $\varepsilon$-approximate solution of (BP-DkBS). This establishes the following theorem.

\begin{theorem}
Let $G$ be a graph with $n$ vertices and maximum degree $d$. Then, there exists an algorithm that runs in time $n^{O\left( \frac{\log d}{\varepsilon^2}\right)}$ and computes a $k \times k$-bipartite subgraph of density at least $\rho(S^*, T^*) - \varepsilon$.
\end{theorem}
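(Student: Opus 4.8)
The plan is to mimic the proof of Theorem~\ref{thm:nash}, with the bilinear program (BP-DkBS) playing the role of (BP) and the adjacency matrix $A$ playing the role of $C$. First I would fix the norm parameter $p := \max\{\log d,\, 2\}$. Since $A$ is a $0/1$ matrix with at most $d$ ones in every column (the maximum degree is $d$), each column satisfies $\|A^i\|_p \le d^{1/p} \le 2$, so, modulo a fixed scaling factor, the columns of $A$ lie in the $p$-unit ball. Let $(\hat x,\hat y)$ be an optimal solution of (BP-DkBS); by the discussion preceding the theorem we may take $\hat x,\hat y$ to be basic, so $\hat x^T A \hat y = \rho(S^*,T^*)$. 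The vector $A\hat y = \sum_i \hat y_i A^i$ lies in $\conv(\{A^i\}_i)$, so Theorem~\ref{thm:caratheodory-p}, applied with $\gamma \le 2$, yields an $O(p/\varepsilon^2)$-uniform vector $u^\star$ with $\|A\hat y - u^\star\|_p \le \varepsilon/2$.

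The algorithm enumerates every multiset $S$ of $[n]$ of cardinality $O(p/\varepsilon^2)$, forms $u = \frac{1}{|S|}\sum_{i\in S} A^i$, and for each such $u$: (a) solves the convex program $\min_y \|Ay-u\|_p$ subject to $y\in\Delta^n$, $y_i\le 1/k$; and (b) takes $x$ maximizing $x^T u$ over $x\in\Delta^n$, $x_i\le 1/k$, which is simply $1/k$ mass on the $k$ largest coordinates of $u$. Whenever the optimum in (a) is at most $\varepsilon/2$ the algorithm records the pair $(x,y)$, and at the end it outputs the recorded pair with the largest $x^T A y$, after re-optimizing $x$ then $y$ as linear programs to make the solution basic (as in the discussion preceding the theorem), returning $S':=\Supp(x)$ and $T':=\Supp(y)$.

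For correctness, when $u = u^\star$ the choice $y=\hat y$ is feasible for (a) with value $\le\varepsilon/2$, so that iteration is recorded; writing $q$ for the H\"older conjugate of $p$ and using $\|x\|_q\le\|x\|_1=1$, H\"older's inequality gives $x^T u^\star \ge \hat x^T u^\star \ge \hat x^T A\hat y - \varepsilon/2 = \rho(S^*,T^*)-\varepsilon/2$, and then $x^T A y \ge x^T u^\star - \|x\|_q\|Ay-u^\star\|_p \ge \rho(S^*,T^*)-\varepsilon$. Hence the output value is at least $\rho(S^*,T^*)-\varepsilon$, and rounding to a basic solution does not decrease $x^T A y$ while making it equal to $\rho(S',T')$, so $\rho(S',T')\ge\rho(S^*,T^*)-\varepsilon$. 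The running time is dominated by the $n^{O(p/\varepsilon^2)}$ iterations, each solving a polynomial-time convex program; with $p=O(\log d)$ this is $n^{O(\log d/\varepsilon^2)}$.

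The main obstacle—essentially the only nonroutine point—is that (BP-DkBS) has a genuine bilinear objective and, unlike the Nash setting, there are no payoff scalars $\pi_1,\pi_2$ anchoring the guessed objective value; one must check that guessing $u\approx A\hat y$ and then separately optimizing $x$ and $y$ loses at most $\varepsilon$, and that the box constraints $x_i,y_i\le 1/k$ survive the Carath\'eodory step (they do, since $u^\star$ constrains only $x$, through the linear functional $x^Tu^\star$, and is never imposed as an equality $Ay=u^\star$). I expect the verification that the final basic-solution rounding does not decrease $x^T A y$ to reduce to the two-linear-program argument already sketched before the theorem.
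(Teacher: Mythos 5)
Your proposal is correct and follows essentially the same route as the paper, which itself only sketches this theorem by deferring to the column-sparsity observation and a modification of Algorithm~\ref{alg:nash}; your decoupled two-stage optimization (minimize $\|Ay-u\|_p$ over $y$, maximize $x^Tu$ over $x$) is the natural instantiation of that modification for a bilinear program without the payoff scalars $\pi_1,\pi_2$, and your H\"older-inequality chain and the basic-solution rounding argument match what the paper intends. In fact your writeup supplies details the paper leaves implicit, and the one point you flag as nonroutine is handled correctly.
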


 %In particular, let $x'$ and $y'$ correspond to an $\varepsilon$-approximate solution of the bilinear program. We can update $x'$ and $y'$ such that the objective function vale, i.e., $$, stays as is and

% (additive) $\varepsilon$-approximate solution of \rm{DBS}. In particular, setting $S'=\Supp(x)$ and $T' = \Supp(y')$ we get
%\begin{align}
%\frac{|E(S',T')|}{|S'| |T'|} \geq \rho_k(G) - \varepsilon
%\end{align}
%and $|S'|, |T'| \leq k$. This follows from the fact that there exists an optimal solution of this bilinear program, $(x^*, y^*)$, where $x^*$ is a basic optimal solution of the linear program obtained by fixing $y$ to be $y^*$ in the above bilinear program. Similarly, $y^*$ is a basic optimal solution of the linear program obtained by fixing $x$ to be $x^*$. Therefore, all the components of both $x^*$ and $y^*$ are either $0$ or $1/k$. Overall, we get that the

\section{Extensions}\label{sect:ext}
This section extends Theorem~\ref{thm:caratheodory-p} to address convex hulls of matrices. Here we also detail approximate versions of  certain generalizations of the Carath\'{e}odory's theorem; specifically, we focus on the colorful Carath\'{e}odory theorem and Tverberg's theorem. % particular generalizations 

First of all, note that a $d \times d$ matrix can be considered as a vector in $\mathbb{R}^{d^2}$, and hence directly applying Theorem~\ref{thm:caratheodory-p} to vectors in $\mathbb{R}^{d^2}$ we get results for entrywise $p$-norms. Recall that for a $d \times d$ matrix $Y$ the entrywise $p$-norm is defined as follows $\| Y \|_p := \left(\sum_{i=1}^d \sum_{j=1}^d |Y_{i,j}|^p \right)^{1/p}$.

In particular, we get the an approximate version of the Birkhoff-von Neumann theorem by considering $d \times d$ matrices as vector of size $d^2$ and applying Theorem~\ref{thm:caratheodory-p} with norm $p = \log d$. We know via the Birkhoff-von Neumann theorem that any $d \times d$ doubly stochastic matrix can be expressed as a convex combination of $d \times d$ permutation matrices (see, e.g.,~\cite{Barvinok}). The following corollary shows that for every doubly stochastic matrix $D$ there exists an $\varepsilon$-close (in the entrywise $\log d$-norm, and hence in the entrywise $\infty$-norm) doubly stochastic matrix $D'$ that can be expressed as a convex combination of $O\left( \frac{\log d}{\varepsilon^2} \right) $ permutation matrices. 

We say that a doubly stochastic matrix $D'$ is $k$ uniform if there exists a size $k$ multiset  $\Pi$ of permutation matrices such that $D'= \frac{1}{k} \sum_{P \in \Pi} P$.

\begin{corollary}
For every $d \times d$ doubly stochastic matrix $D$ there exists an $O\left( \frac{\log d}{\varepsilon^2} \right) $ uniform doubly stochastic matrix $D'$ such that $\max_{i,j} \ | D_{i,j} - D'_{i,j} | \leq \varepsilon$.
\end{corollary}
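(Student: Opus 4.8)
The plan is to combine the Birkhoff--von Neumann theorem with Theorem~\ref{thm:caratheodory-p} applied in the entrywise $p$-norm. First I would let $X$ be the set of all $d \times d$ permutation matrices, viewed as vectors in $\mathbb{R}^{d^2}$; by the Birkhoff--von Neumann theorem, $\conv(X)$ is exactly the set of $d \times d$ doubly stochastic matrices. In particular, the given doubly stochastic matrix $D$ satisfies $D \in \conv(X)$, which is precisely the hypothesis needed to invoke Theorem~\ref{thm:caratheodory-p}.

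Next I would control the scaling parameter $\gamma := \max_{P \in X} \| P \|_p$. Each permutation matrix has exactly $d$ nonzero entries, each equal to $1$, so its entrywise $p$-norm is $\| P \|_p = d^{1/p}$. Choosing $p = \max\{ 2, \log d \}$ (the maximum handles small $d$, where we still need $p \geq 2$ for the theorem to apply) gives $d^{1/p} \leq d^{1/\log d} = O(1)$, so $\gamma$ is an absolute constant.

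Then I would apply Theorem~\ref{thm:caratheodory-p} with this $X$, target vector $\mu = D$, and the given $\varepsilon$. It produces a $\frac{4p \gamma^2}{\varepsilon^2}$ uniform vector $D' \in \conv(X)$ with $\| D - D' \|_p \leq \varepsilon$. Since $D' \in \conv(X)$, it is doubly stochastic; and since $D'$ is $\frac{4p \gamma^2}{\varepsilon^2}$ uniform with respect to $X$, there is a multiset $\Pi$ of at most $k = \frac{4p\gamma^2}{\varepsilon^2} = O\!\left( \frac{\log d}{\varepsilon^2} \right)$ permutation matrices with $D' = \frac{1}{k} \sum_{P \in \Pi} P$, i.e.\ $D'$ is $O(\log d / \varepsilon^2)$ uniform in exactly the sense defined just before the corollary. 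Finally, since the entrywise $\infty$-norm is dominated by the entrywise $p$-norm, $\max_{i,j} | D_{i,j} - D'_{i,j} | = \| D - D' \|_\infty \leq \| D - D' \|_p \leq \varepsilon$, which is the claim.

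I do not expect a real obstacle here; the only points needing minor care are (i) ensuring $p \geq 2$ when $d$ is small, which the $\max$ in the choice of $p$ takes care of (at the cost of only changing the constant, since $O(\log d/\varepsilon^2)$ absorbs the constant regime), and (ii) noting that ``$k$ uniform with respect to $X$'' for this particular $X$ is literally the notion of a $k$ uniform doubly stochastic matrix introduced immediately before the corollary, so the translation between the vector statement and the matrix statement is immediate.
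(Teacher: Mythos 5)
Your proposal is correct and follows exactly the route the paper takes: view the permutation matrices as vectors in $\mathbb{R}^{d^2}$, note via Birkhoff--von Neumann that $D$ lies in their convex hull, observe that each permutation matrix has entrywise $p$-norm $d^{1/p} = O(1)$ for $p = \log d$, and apply Theorem~\ref{thm:caratheodory-p} together with the domination of the entrywise $\infty$-norm by the entrywise $p$-norm. Your extra care with $p = \max\{2, \log d\}$ for small $d$ is a minor refinement the paper leaves implicit.
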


In addition to entrywise norms, we can establish an approximate version of Carath\'{e}odory's theorem for matrices under the Schatten $p$-norm.  Write $\| Y \|_{S_p}$ to denote that Schatten $p$-norm of a $d \times d'$ matrix $Y$, i.e., $ \| Y \|_{S_p} := \| \sigma(Y) \|_p$, where $\sigma(Y)$ is the vector of singular values of $Y$. Given a set of matrices $\mathcal{Y}=\{Y_1, Y_2,\ldots, Y_n \} \subset \mathbb{R}^{d \times d'}$, we say that a matrix $M' \in \conv(\mathcal{Y})$ is $k$ uniform (implicitly with respect to $\mathcal{Y}$) if there exists a size $k$ multiset $S$ of $[n]$ such that $M' = \frac{1}{k} \sum_{i \in S} Y_i$.

We can directly adopt the proof of Theorem~\ref{thm:caratheodory-p} and, in particular, use the matrix version of Khintchine inequality~\cite{So,tomczak} (instead of Theorem~\ref{thm:Khintchine-v}) to obtain the following result.

\begin{theorem}
\label{thm:caratheodory-matrix}
Given a set of matrices $\mathcal{Y}= \{ Y_1, Y_2,\ldots, Y_n \} \subset \mathbb{R}^{d \times d'} $ and $\varepsilon >0$. For every matrix $M \in \conv(\mathcal{Y})$ and $2 \leq p < \infty $ there exists an  $O\left(\frac{{p} \gamma^2}{\varepsilon^2}\right)$ uniform matrix $M' \in \conv(\mathcal{Y})$ such that $\| M - M' \|_{S_p} \leq \varepsilon$. Here, $\gamma := \max_{Y \in \mathcal{Y}} \  \| Y \|_{S_p}$.
\end{theorem}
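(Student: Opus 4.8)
The plan is to mimic the proof of Theorem~\ref{thm:caratheodory-p} almost verbatim, replacing the vector $p$-norm by the Schatten $p$-norm $\|\cdot\|_{S_p}$ and replacing the vector Khintchine inequality (Theorem~\ref{thm:Khintchine-v}) by its matrix analogue. First I would fix $M \in \conv(\mathcal Y)$ and write $M = \sum_{i=1}^n \alpha_i Y_i$ with $\alpha$ a probability distribution; then $M = \E_{W \sim \alpha}[W]$. Drawing $W_1,\dots,W_m$ i.i.d.\ from $\alpha$, set $G := \left\| \tfrac1m \sum_{i=1}^m W_i - M \right\|_{S_p}$. The key estimate to establish is $\E[G] \le \tfrac{2\sqrt p\,\gamma}{\sqrt m}$, which for $m = O(p\gamma^2/\varepsilon^2)$ gives $\E[G] \le \varepsilon$, so by the probabilistic method some realization of the sample mean --- which is an $m$-uniform matrix in $\conv(\mathcal Y)$ --- is within $\varepsilon$ of $M$ in Schatten $p$-norm.

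For the expectation bound I would follow the identical sequence of manipulations as in Theorem~\ref{thm:caratheodory-p}: introduce an independent "ghost" sample $W_1',\dots,W_m'$, use $M = \E[\tfrac1m\sum W_i']$ and Jensen's inequality (the Schatten $p$-norm is a norm, hence convex, for $p\ge 1$) to pass to $\tfrac1m \E\big\|\sum_i (W_i - W_i')\big\|_{S_p}$, then symmetrize by inserting i.i.d.\ Rademacher signs $r_i$ (valid since $W_i - W_i'$ is symmetric) and apply the triangle inequality plus the tower property to reduce to $2\,\E\big\|\sum_{i=1}^m r_i \tfrac{W_i}{m}\big\|_{S_p}$. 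At this point, conditioning on $W_1,\dots,W_m$ and applying the matrix Khintchine inequality of~\cite{So,tomczak} to the deterministic matrices $\tfrac{W_i}{m}$ gives $\E_{r}\big\|\sum_i r_i \tfrac{W_i}{m}\big\|_{S_p} \le \sqrt p\,\big(\sum_i \|\tfrac{W_i}{m}\|_{S_p}^2\big)^{1/2}$; since each $W_i$ lies in $\mathcal Y$ we have $\|W_i\|_{S_p} \le \gamma$, so this is at most $\sqrt p\,\gamma/\sqrt m$, and combining gives $\E[G] \le 2\sqrt p\,\gamma/\sqrt m$ as desired.

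One point I would be careful about is the exact form of the matrix Khintchine inequality being cited: I want the statement $\E\big\|\sum_i r_i A_i\big\|_{S_p} \le \sqrt p\,\big(\sum_i \|A_i\|_{S_p}^2\big)^{1/2}$ (possibly with an absolute constant, which only affects the hidden constant in $O(p\gamma^2/\varepsilon^2)$), in exact analogy with Theorem~\ref{thm:Khintchine-v}; indeed Theorem~\ref{thm:Khintchine-v} itself was \emph{derived} from such a matrix statement in~\cite{So} by specializing to diagonal matrices, so the needed inequality is precisely the general (non-diagonal) version of that result, applied after a Jensen step to pass from the $p$th-moment form to the first-moment form. The main (minor) obstacle is thus just bookkeeping: confirming that the matrix Khintchine bound holds with the clean $\sqrt p$ dependence --- or absorbing any constant into the $O(\cdot)$ --- and checking that every symmetrization and Jensen step that worked for a general norm in Theorem~\ref{thm:caratheodory-p} applies unchanged to $\|\cdot\|_{S_p}$, which it does since the only properties used are that it is a norm and that the summands are supported on a set of $S_p$-norm at most $\gamma$. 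No genuinely new idea is required beyond swapping in the matrix inequality.
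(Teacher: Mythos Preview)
Your proposal is correct and is exactly the approach the paper indicates: the paper does not spell out a separate proof but simply says to ``directly adopt the proof of Theorem~\ref{thm:caratheodory-p}'' using the matrix Khintchine inequality of~\cite{So,tomczak} in place of Theorem~\ref{thm:Khintchine-v}. Your observation that Theorem~\ref{thm:Khintchine-v} was itself obtained by specializing that matrix inequality to diagonal matrices, and that any absolute constant is absorbed into the $O(\cdot)$, is precisely the right way to close the loop.
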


\subsection{Approximating Colorful Carath\'{e}odory and Tverberg's Theorem}

The colorful Carath\'{e}odory theorem (see~\cite{Barany} and~\cite{Matouvsek}) asserts that if the convex hulls of $d+1$ sets (the color classes) $X_1, X_2, \ldots, X_{d+1} \subset \mathbb{R}^d$ intersect, then every vector in the intersection $\mu \in \cap_i \conv(X_i)$ can be expressed as a convex combination of vectors, each of which has a different color.   
%has a ``colorful'' convex  representation; i.e., $\mu$ 
\begin{theorem}[Colorful Carath\'{e}odory Theorem]
\label{thm:colorful-carat}
Let $X_1, X_2, \ldots, X_{d+1}$ be $d+1$ sets in $\mathbb{R}^d$ and vector $\mu \in \cap_i \conv(X_i)$. Then, there exists $d+1$ vectors $x_1, x_2, \ldots, x_{d+1}$ such that $x_i \in X_i$ for each $i$ and $\mu \in \conv(\{x_1, x_2, \ldots, x_{d+1} \})$.
\end{theorem}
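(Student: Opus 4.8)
The plan is to run B\'{a}r\'{a}ny's classical nearest-point argument. First I would reduce to the case in which every color class $X_i$ is finite: since $\mu \in \conv(X_i)$, the ordinary Carath\'{e}odory theorem lets me replace $X_i$ by a subset of at most $d+1$ of its points without changing the fact that $\mu \in \conv(X_i)$. Once all the $X_i$ are finite there are only finitely many \emph{colorful simplices}, i.e., sets of the form $\conv(\{x_1,\dots,x_{d+1}\})$ with $x_i \in X_i$ for each $i$, so I may fix one, with vertices $x_1^*,\dots,x_{d+1}^*$, that \emph{minimizes} the Euclidean distance $\mathrm{dist}\bigl(\mu, \conv(\{x_1^*,\dots,x_{d+1}^*\})\bigr)$. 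Call this minimizing simplex $T^*$; the goal is to show $\mu \in T^*$, which is exactly the conclusion of the theorem.

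Next I would argue by contradiction. Suppose $\mu \notin T^*$ and let $p$ be the (unique) nearest point of $T^*$ to $\mu$, so $v := \mu - p \neq 0$. The variational characterization of the metric projection onto the convex set $T^*$ gives $\langle v,\, q - p\rangle \le 0$ for every $q \in T^*$, whereas $\langle v,\, \mu - p\rangle = \|v\|^2 > 0$. Since $p$ lies on the boundary of $T^*$ (were it interior, moving slightly from $p$ toward $\mu$ would stay in $T^*$ and decrease the distance), $p$ lies in a face of $T^*$ spanned by at most $d$ vertices; hence there is an index $j$ with $p \in \conv(\{x_\ell^* : \ell \neq j\})$.

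The last step is the color exchange. Writing $\mu$ as a convex combination of points of $X_j$ and pairing that representation against $v$, the strict inequality $\langle v,\, \mu - p\rangle > 0$ forces some $x_j' \in X_j$ with $\langle v,\, x_j' - p\rangle > 0$. Replacing $x_j^*$ by $x_j'$ yields a new colorful simplex $T' := \conv\bigl(\{x_\ell^* : \ell \neq j\} \cup \{x_j'\}\bigr)$, which still contains $p$ (since $p \in \conv(\{x_\ell^* : \ell \neq j\}) \subseteq T'$) and also contains $x_j'$, hence the entire segment joining them. Differentiating $t \mapsto \|p + t(x_j' - p) - \mu\|^2$ at $t = 0$ gives the value $-2\langle v,\, x_j' - p\rangle < 0$, so some point of $T'$ is strictly closer to $\mu$ than $p$ is, contradicting the minimality of $T^*$. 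Hence $\mu \in T^*$, which proves the theorem.

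I expect the main obstacle to be making the boundary/face step and the exchange step watertight in degenerate configurations, i.e., when the $d+1$ chosen points are affinely dependent so that $T^*$ is not a genuine $d$-dimensional simplex. This is handled by the observation that if the vertices are affinely dependent then \emph{every} point of $T^*$ already lies in the convex hull of some $d$ of them (peel off a coefficient using an affine dependence relation), so the index $j$ with $p \in \conv(\{x_\ell^* : \ell \neq j\})$ exists in all cases; and the strictness $\langle v,\, \mu - p\rangle = \|v\|^2 > 0$ is exactly what excludes the degenerate possibility that every point of $X_j$ lies on the separating hyperplane through $p$.
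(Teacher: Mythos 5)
Your proof is correct: it is precisely B\'{a}r\'{a}ny's classical nearest-colorful-simplex argument, including the right handling of the degenerate (affinely dependent) case via coefficient peeling and of the exchange step via averaging $\langle v,\,\cdot - p\rangle$ over a convex representation of $\mu$ in $X_j$. The paper itself does not prove this theorem --- it states it as a known result with citations to B\'{a}r\'{a}ny and Matou\v{s}ek --- and your argument is exactly the proof found in those references, so there is nothing to reconcile.
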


For a given collection of $d+1$ sets (the color classes) $X_1, X_2, \ldots, X_{d+1} \subset \mathbb{R}^d$, a set $R \in \mathbb{R}^d$ is called a rainbow if $|R \cap X_i| = 1$ for all $i \in [d+1]$. %In particular, the set of vectors $\{x_1, x_2, \ldots, x_{d+1}\}$ in Theorem~\ref{thm:colorful-carat} is a rainbow that contains $\mu$ in its convex hull. 

The colorful Carath\'{e}odory  theorem guarantees the existence of a rainbow $R = \{x_1, x_2, \ldots, x_{d+1}\} $ whose convex hull contains the given vector $\mu$. But, determining a rainbow $R$ with $\mu \in \conv(R)$ in polynomial time remains an interesting open problem; see, e.g.,~\cite{Matouvsek}. Here we consider a natural approximate version of this question wherein the goal is to find a rainbow $R'$ whose convex hull is $\varepsilon$ close to $\mu$ in the following sense $\inf_{v \in \conv(R')} \| \mu - v \|_p  \leq \varepsilon$. Below we show that, for appropriately scaled vectors, Theorem~\ref{thm:caratheodory-p} can be used to efficiently determine such a rainbow $R'$. 

%Given $d+1$ sets $X_1, X_2, \ldots, X_{d+1} \subset \mathbb{R}^d$, let $\mu \in \cap_i \conv(X_i)$. Then,
Theorem~\ref{thm:caratheodory-p} implies that there exists a vector $\mu'$ that satisfies $\| \mu - \mu'\|_p \leq \varepsilon $ and can be expressed as a convex combination of $t = O\left( \frac{p\gamma^2}{\varepsilon^2} \right)$ vectors of $R$, for $p \geq 2 $ and $ \gamma := \max_{x \in \cup_i X_i } \ \| x \|_p$. Let $R'$ be a size $t$ subset of $R$ such that $ \mu' \in \conv(R')$. Write $n = \sum_{i=1}^{d+1} |X_i|$ and note that we can exhaustively search for $R'$ in time ${d+1 \choose t} n^{O(t)} $. For a guessed $R'$ we can test if its convex hull is close to $\mu$ (i.e., $\inf_{v \in \conv(R')} \| \mu - v \|_p  \leq \varepsilon$) via a convex program. Moreover, after finding an appropriate subset $R'$ we can extend it to obtain a rainbow that is $\varepsilon$ close to $\mu$. 

\hfill \\

Another generalization of Carath\'{e}odory's theorem is Tverberg's Theorem (see~\cite{Tverberg} and~\cite{Matouvsek}), which is stated below. 

\begin{theorem}[Tverberg's Theorem]
Any set of $(r-1)(d+1) + 1$ vectors $X \subset \mathbb{R}^d $ can be partitioned into $r$ pairwise disjoint subsets $X_1, X_2, \ldots, X_r \subseteq X$ such that their convex hulls intersect: $\cap_{i=1}^r \conv(X_i) \neq \phi$. 
\end{theorem}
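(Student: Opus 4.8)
The plan is to derive Tverberg's theorem from the colorful Carath\'eodory theorem (Theorem~\ref{thm:colorful-carat}) by means of Sarkaria's tensoring trick: given the $N := (r-1)(d+1)+1$ input vectors $x_1,\dots,x_N \in \mathbb{R}^d$, I would manufacture an instance of colorful Carath\'eodory in a space of dimension $d' := (d+1)(r-1)$ with exactly $N = d'+1$ color classes, arranged so that any ``rainbow'' certified by Theorem~\ref{thm:colorful-carat} decodes into a Tverberg partition of $X$.

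First I would homogenize the points, setting $\widehat{x}_i := (x_i,1) \in \mathbb{R}^{d+1}$, and fix vectors $v_1,\dots,v_r \in \mathbb{R}^{r-1}$ that are the vertices of a simplex centered at the origin, so that $\sum_{j=1}^r v_j = 0$ while every $r-1$ of them are linearly independent. For each $i \in [N]$ define the color class $X_i := \{\, \widehat{x}_i \otimes v_1,\ \dots,\ \widehat{x}_i \otimes v_r \,\} \subset \mathbb{R}^{d'}$. Since $\tfrac1r\sum_{j=1}^r \widehat{x}_i \otimes v_j = \widehat{x}_i \otimes \bigl(\tfrac1r\sum_{j=1}^r v_j\bigr) = 0$, the origin lies in $\conv(X_i)$ for every $i$, so $0 \in \bigcap_{i=1}^N \conv(X_i)$ and the hypothesis of Theorem~\ref{thm:colorful-carat} is satisfied.

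Colorful Carath\'eodory then produces, for each $i$, a color $j(i) \in [r]$ together with weights $\lambda_i \ge 0$ with $\sum_i \lambda_i = 1$ and $\sum_{i=1}^N \lambda_i\,(\widehat{x}_i \otimes v_{j(i)}) = 0$. Grouping indices by color, $I_j := \{\, i : j(i) = j \,\}$, and writing $w_j := \sum_{i \in I_j} \lambda_i \widehat{x}_i \in \mathbb{R}^{d+1}$, this relation becomes $\sum_{j=1}^r w_j \otimes v_j = 0$. Substituting $v_r = -\sum_{j<r} v_j$ and using linear independence of $v_1,\dots,v_{r-1}$ forces $w_1 = \dots = w_r =: (y,t)$ with $y \in \mathbb{R}^d$, $t \in \mathbb{R}$. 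Reading off the last coordinate gives $t = \sum_{i \in I_j}\lambda_i$ for every $j$, and summing over $j$ yields $rt = 1$, hence $t = 1/r > 0$; in particular no $I_j$ is empty. Finally, for each $j$ the numbers $\{\, r\lambda_i : i \in I_j \,\}$ are nonnegative, sum to $1$, and satisfy $\sum_{i \in I_j} r\lambda_i\, x_i = ry$, so $ry \in \conv(\{\, x_i : i \in I_j \,\})$. Thus the pairwise disjoint parts $\{\, x_i : i \in I_j \,\}$, $j = 1,\dots,r$, share the common point $ry$, which is exactly the conclusion of Tverberg's theorem.

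The only step that is not pure bookkeeping is the linear-algebraic ``decoding'': verifying that a vanishing tensor combination $\sum_{j=1}^r w_j \otimes v_j = 0$ forces all the $w_j$ to coincide. This is where the careful choice of $v_1,\dots,v_r$ (any $r-1$ of them a basis of $\mathbb{R}^{r-1}$, their sum zero) is used, together with the elementary fact that $\sum_{j=1}^{r-1} u_j \otimes v_j = 0$ with $v_1,\dots,v_{r-1}$ independent implies each $u_j = 0$. Once that is in hand, extracting the last coordinate to certify $t = 1/r > 0$ (hence nonemptiness of every part) and rescaling weights by $r$ is routine; I would also double-check the count $N = (r-1)(d+1)+1 = d'+1$ so that Theorem~\ref{thm:colorful-carat} applies with precisely the right number of color classes. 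As a side remark, feeding the same reduction an \emph{approximate} colorful Carath\'eodory statement in place of Theorem~\ref{thm:colorful-carat} should yield a corresponding approximate Tverberg guarantee, in the spirit of the rest of this section.
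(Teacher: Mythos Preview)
Your argument is correct: this is Sarkaria's tensor-product reduction of Tverberg to colorful Carath\'eodory, and you have carried out all the steps accurately, including the dimension count $N = (r-1)(d+1)+1 = d'+1$, the decoding $\sum_j w_j \otimes v_j = 0 \Rightarrow w_1 = \cdots = w_r$ via linear independence of $v_1,\dots,v_{r-1}$, and the extraction of the common point $ry$ with the nonemptiness of each $I_j$ following from $t = 1/r > 0$.

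However, there is nothing to compare against: the paper does \emph{not} prove Tverberg's theorem. It is stated purely as background, with references to \cite{Tverberg} and \cite{Matouvsek}, and is then combined with Theorem~\ref{thm:caratheodory-p} to obtain the approximate variant (Theorem~\ref{thm:tverberg}). So your proposal supplies a valid proof of a result that the paper simply quotes from the literature. Your closing remark---that feeding an approximate colorful Carath\'eodory into the same reduction should yield an approximate Tverberg statement---is in the spirit of the section, though note that the paper obtains Theorem~\ref{thm:tverberg} more directly by applying Theorem~\ref{thm:caratheodory-p} to each part $X_i$ of an exact Tverberg partition, rather than by approximating inside the tensor construction.
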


%The subsets $X_1, X_2, X_r$ in this theorem are called a Tverberg partition of $X$. Also, 
Note that here the special case of $r=2$ corresponds to Radon's theorem (see, e.g.,~\cite{Matouvsek}). 

Next we consider an approximate version of Tverberg's theorem. At a high level, our result guarantees the existence of pairwise disjoint subsets, $X_1',\ldots, X_r'$, that have small cardinality and whose convex hulls, $\conv(X_1'),\ldots, \conv(X_r')$, are ``concurrently close'' to each other. Formally:

\begin{definition}[Concurrently $\varepsilon$ close]
Sets $V_1, V_2, ..., V_r \subset \mathbb{R}^d$ are said to be concurrently $\varepsilon$ close under the $p$-norm distance if there exists a vector $\mu \in \mathbb{R}^d$ such that  $\inf_{v \in \conv(V_i)} \| \mu - v \|_p \leq \varepsilon$, for all $i \in [r]$.
\end{definition}

Tverberg's theorem together with Theorem~\ref{thm:caratheodory-p} establish the following result. 
\begin{theorem}
\label{thm:tverberg}
Let norm $p \in [2, \infty)$ and parameter $\varepsilon >0$. Then, any set of $(r-1)(d+1) + 1$ vectors $X \subset \mathbb{R}^d $ can be partitioned into $r$ pairwise disjoint subsets $X'_1, X'_2, \ldots, X'_r \subseteq X$ that are concurrently $\varepsilon$ close under the $p$-norm distance and satisfy $|X_i'| = O\left( \frac{p\gamma^2}{\varepsilon^2} \right)$, for all $i \in [r]$. Here, $\gamma:= \max_{x \in X} \| x \|_p$.
\end{theorem}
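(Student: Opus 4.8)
The plan is to obtain the result by chaining Tverberg's theorem with the approximate Carath\'{e}odory theorem (Theorem~\ref{thm:caratheodory-p}). First I would apply Tverberg's theorem to the given set $X$ of $(r-1)(d+1)+1$ vectors, which yields a partition into pairwise disjoint subsets $X_1, X_2, \ldots, X_r \subseteq X$ with $\bigcap_{i=1}^r \conv(X_i) \neq \phi$; fix once and for all a common point $\mu \in \bigcap_{i=1}^r \conv(X_i)$.

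Next, for each color class $i \in [r]$, I would invoke Theorem~\ref{thm:caratheodory-p} on the set $X_i$ with the target vector $\mu \in \conv(X_i)$ and the same norm $p$. Since every vector in $X_i$ has $p$-norm at most $\gamma = \max_{x \in X} \|x\|_p$, this produces a $\frac{4p\gamma^2}{\varepsilon^2}$ uniform vector $\mu'_i \in \conv(X_i)$ with $\|\mu - \mu'_i\|_p \leq \varepsilon$. By the definition of uniformity, $\mu'_i$ is a convex combination of at most $O\!\left(\frac{p\gamma^2}{\varepsilon^2}\right)$ vectors of $X_i$; let $X'_i$ be such a subset of $X_i$, padded if necessary with extra elements of $X_i$ so that its cardinality equals the stated bound exactly (padding only enlarges $\conv(X'_i)$, so $\mu'_i \in \conv(X'_i)$ still holds).

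All three required properties are then immediate. The $X'_i$ are pairwise disjoint because $X'_i \subseteq X_i$ and the $X_i$ are pairwise disjoint; each has $|X'_i| = O\!\left(\frac{p\gamma^2}{\varepsilon^2}\right)$ by construction; and the single vector $\mu$ witnesses concurrent $\varepsilon$-closeness, since for every $i$ we have $\inf_{v \in \conv(X'_i)} \|\mu - v\|_p \leq \|\mu - \mu'_i\|_p \leq \varepsilon$ using $\mu'_i \in \conv(X'_i)$.

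The one point that needs care — and what I would flag as the ``obstacle,'' though it is a mild one — is that the approximate Carath\'{e}odory step must be run against the \emph{same} target $\mu$ for all $r$ color classes; this shared target is precisely what turns $r$ separate closeness guarantees into a single concurrent-closeness guarantee. Because $\mu$ is generated just once by Tverberg's theorem and then held fixed, no difficulty arises. I would also remark that, exactly as in Theorem~\ref{thm:caratheodory-p}, the cardinalities $|X'_i|$ are independent of the dimension $d$, depending only on $p$, $\gamma$, and $\varepsilon$ (even though the size $(r-1)(d+1)+1$ of the input set necessarily grows with $d$).
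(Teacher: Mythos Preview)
Your proposal is correct and follows exactly the route the paper intends: the paper's ``proof'' is the single sentence ``Tverberg's theorem together with Theorem~\ref{thm:caratheodory-p} establish the following result,'' and you have filled in precisely those details---apply Tverberg to get a common point $\mu$, then apply the approximate Carath\'{e}odory bound inside each Tverberg part with that shared target. The padding step you mention is unnecessary (the size bound is only asserted as $O(\cdot)$), but it does no harm.
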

 
 %, i.e., is an \emph{approximate Tverberg partition}
 
Tverberg's theorem guarantees the existence of a partition with subsets $X_1, X_2,\ldots, X_r$  whose convex hulls intersect. It is an interesting open problem whether such a partition can be determined in polynomial time; see~\cite{Matouvsek}. But, note that Theorem~\ref{thm:tverberg} implies that for fixed $r$ we can efficiently determine an approximate solution for this problem, i.e., determine a partition consisting of subsets that are concurrently $\varepsilon$ close. In particular, with $t = O\left( \frac{p\gamma^2}{\varepsilon^2} \right)$, we can exhaustively search for size $t$ subsets $X'_1, X'_2, \ldots, X'_r \subseteq X$ that are concurrently $\varepsilon$ close under the $p$-norm distance.\footnote{A polynomial-time algorithm exists for testing whether given subsets $X_1', X_2', \ldots, X_r'$ are concurrently $\varepsilon$ close under the $p$-norm distance. In particular, we can run the ellipsoid method with a separation oracle that solves the convex optimization problem $\min_{v \in \conv(X_i')} \| u - v \|_p$ to test if there exists a separating hyperplane for proposed ellipsoid center $u$.} This search runs in time $n^{O(rt)}$. Finally, we can extend the disjoint subsets $X'_1, X'_2, \ldots, X'_r \subseteq X$---by arbitrarily assigning the vectors in $X \setminus \left( \cup_i X_i' \right)$---to obtain a partition of $X$ that is concurrently $\varepsilon$ close.

\section{Lower Bound}
\label{sect:lb}
This section presents a lower bound that proves that in general $\varepsilon$-close (under the $p$-norm distance with $p \in [2, \infty)$) vectors cannot be expressed as a convex combination of less than $ \frac{1}{4 \varepsilon^{p/(p-1)}} $ vectors of the given set. This, in particular, implies that the $1/\varepsilon^2$ dependence established in Theorem~\ref{thm:caratheodory-p} is tight for the Euclidean case ($p=2$). 

To prove the lower bound we consider the convex hull of the standard basis vectors $e_1, e_2, \ldots, e_d \in \mathbb{R}^d$, i.e.,  consider the simplex $\Delta^d$. Write $\bar{e} \in \Delta^d$ to denote the uniform convex combination of $e_1, e_2,\ldots, e_d$: $\bar{e} := (1/d, 1/d, \ldots, 1/d)^T$. Next we show that any vector that can be expressed as a convex combination of less than $\frac{1}{4 \ \varepsilon^{p/(p-1)}} $ standard basis vectors is at least $\varepsilon$ away from $\bar{e}$ in the $p$-norm distance. This establishes the desired lower bound. 

Note that (trivially) any vector in $\Delta^d$ can be expressed as a convex combination of $\frac{1}{\varepsilon^{p/(p-1)}} $ standard basis vectors if, say, $ \frac{1}{\varepsilon^{p/(p-1)}} = d $. Hence, to obtain a meaningful lower bound we need to assume that $\frac{1}{\varepsilon^{p/(p-1)}}  $ is less than $d$.

%Note that (trivially) any vector in $\Delta^d$ can be expressed as a convex combination of $\frac{1}{\varepsilon}$ standard basis vectors if, say, $ \frac{1}{\varepsilon} = d $. Hence, to obtain a meaningful lower bound we need to assume that $\varepsilon $ is greater than $\frac{1}{d}$.

\begin{proposition}
Let  $\frac{1}{\varepsilon^{p/(p-1)}}  < {d}$ and norm $p \geq 2$. Then, any vector $u \in \Delta^d$ that can be expressed as a convex combination of less than $\frac{1}{4 \ \varepsilon^{p/(p-1)}} $ standard basis vectors must satisfy $\| \bar{e} - u \|_p > \varepsilon $.
\end{proposition}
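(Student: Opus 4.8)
The plan is to argue directly at the level of coordinates, with no appeal to Theorem~\ref{thm:caratheodory-p} itself. Write $q := p/(p-1)$ for the H\"older conjugate of $p$, so that the two hypotheses read $1/\varepsilon^q < d$ and ``$u$ is a convex combination of fewer than $\frac{1}{4\varepsilon^q}$ standard basis vectors''. Let $S := \Supp(u)$ and $s := |S|$. Since $\Supp(u)$ is contained in the index set of the basis vectors appearing in the combination, we have $s < \frac{1}{4\varepsilon^q}$; combining this with $\frac{1}{\varepsilon^q} < d$ gives $s < d/4$, and in particular $0 < 1 - s/d$.

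First I would discard the coordinates outside $S$ (each contributes $d^{-p} \ge 0$ to $\|\bar e - u\|_p^p$), leaving $\|\bar e - u\|_p^p \ge \sum_{i\in S} |1/d - u_i|^p$. Then I would apply convexity of $t\mapsto t^p$ on $[0,\infty)$ (valid since $p\ge 1$) to the nonnegative numbers $|u_i - 1/d|$, $i\in S$, obtaining $\sum_{i\in S}|1/d - u_i|^p \ge s^{1-p}\bigl(\sum_{i\in S}|u_i - 1/d|\bigr)^p$. Finally the ordinary triangle inequality gives $\sum_{i\in S}|u_i - 1/d| \ge \bigl|\sum_{i\in S}(u_i - 1/d)\bigr| = |1 - s/d| = 1 - s/d$, where I used $\sum_{i\in S} u_i = 1$ (as $u\in\Delta^d$ is supported on $S$) and $s < d$. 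Chaining these, $\|\bar e - u\|_p^p \ge s^{1-p}(1 - s/d)^p$.

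It then remains to insert the quantitative bounds on $s$. From $s < d/4$ we get $1 - s/d > 3/4$; from $s < \frac{1}{4\varepsilon^q}$ and $p - 1 > 0$ we get $s^{1-p} > (4\varepsilon^q)^{p-1} = 4^{p-1}\varepsilon^p$, using $q(p-1) = p$. Multiplying the two (both factors positive), $\|\bar e - u\|_p^p > 4^{p-1}(3/4)^p\,\varepsilon^p = \tfrac{3^p}{4}\,\varepsilon^p$, and since $3^p/4 \ge 9/4 > 1$ for $p\ge 2$, taking $p$-th roots yields $\|\bar e - u\|_p > \varepsilon$.

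I do not expect a genuine obstacle here: the estimate is elementary. The only point needing a little care is the treatment of the coordinates $i\in S$, where $u_i$ may lie on either side of $1/d$; passing to $|u_i - 1/d|$ and invoking $\sum_{i\in S}|u_i - 1/d| \ge \bigl|\sum_{i\in S}(u_i - 1/d)\bigr|$ \emph{before} using convexity sidesteps any case analysis. One should also check that the chain stays strict where it matters — it suffices that $s < \frac{1}{4\varepsilon^q}$ is strict and that $3^p/4 > 1$. (An alternative route via the reverse triangle inequality, $\|\bar e - u\|_p \ge \|u\|_p - \|\bar e\|_p \ge s^{-1/q} - d^{-1/q}$ together with the power-mean bound $\|u\|_p \ge s^{-1/q}$, also works, but there the constants come out tighter, with $p=2$ becoming the boundary case, so the coordinate argument above is the cleaner option.)
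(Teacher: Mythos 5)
Your proof is correct and follows essentially the same route as the paper's: both reduce to the bound $\|\bar e - u\|_p^p \ge s^{1-p}\left(1-s/d\right)^p$ for a vector supported on $s$ coordinates, then insert $s < \frac{1}{4\varepsilon^{p/(p-1)}} < d/4$ to arrive at the identical numeric estimate $\|\bar e - u\|_p > 3\varepsilon/4^{1/p} > \varepsilon$ for $p \ge 2$. The only difference is in how the key inequality is obtained: you use the power-mean (Jensen) inequality together with the triangle inequality on the support, whereas the paper invokes KKT conditions to argue that the uniform weighting minimizes the $p$-distance to $\bar e$ over a fixed support — your derivation is the more self-contained of the two.
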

\begin{proof}
Write $k =\frac{1}{4 \ \varepsilon^{p/(p-1)}}  $. Since $\bar{e}$ is symmetric, we can assume without loss of generality that $u$ is a convex combination of vectors $e_1, e_2, \ldots, e_k$. The optimal value of the following convex program lower bounds the $p$-norm distance between $u$ and $\bar{e}$:

\begin{align*}
\min_{\alpha_1,\ldots,\alpha_k} & \ \ \ \left\| \bar{e} -  \sum_{i=1}^k \alpha_i e_i \right\|_p \nonumber \\
\textrm{subject to } \ \ \ &  \sum_{i=1}^k \alpha_i = 1 \\
& \alpha_i \geq 0 \qquad \forall i \in [k].
\end{align*}

K.K.T.~conditions imply that the optimal is achieved by setting $\alpha_i = 1/k$ for all $i \in [k]$. Hence, the $p$-norm distance between $\bar{e}$ and $u$ is at least $\left( k \left( \frac{1}{k} - \frac{1}{d} \right)^p \right)^{1/p}$. The choice of $k$ and the assumption that $\frac{1}{\varepsilon^{p/(p-1)}}  < {d}$ guarantee that $\| \bar{e} - u \|_p > \frac{3 \varepsilon}{ 4^{1/p} } $. Since $p \geq 2$, we get the desired claim.  

%Write $\delta =  \varepsilon^{p/(p-1)}$, so $1/k = 2 \delta $...$the distance becomes ( \delta^{p-1}/2 )^p which is equal to \varepsilon / 2^{1/p} 
%\begin{align*}
%\left( k \left( \frac{1}{k} - \frac{1}{d} \right)^p \right)^{1/p} & = \left( 2 \varepsilon^{p/(p-1)} \right) \\
%\end{align*} 

\end{proof}

%We show that in general a convex combination of $\Omega\left( \frac{1}{\varepsilon^{p/(p-1)}}\right)$ vectors from the given set to is required express $\varepsilon$-close (under the $p$-norm distance) vectors.

\section*{Acknowledgements}
The author thanks Federico Echenique, Katrina Ligett, Assaf Naor, Aviad Rubinstein, Anthony Man-Cho So, and Joel Tropp for helpful discussions and references. This research was supported by NSF grants CNS-0846025 and CCF-1101470, along with a Linde/SISL postdoctoral fellowship.

%for comments on an earlier draft of this paper.
\bibliographystyle{plain}
\bibliography{Caratheodory-Nash}

\end{document}